	\newtheorem{theorem}{Theorem}[section]
	\newtheorem{corollary}[theorem]{Corollary}
	\newtheorem{lemma}[theorem]{Lemma}
	\newtheorem{definition}{Definition}
	\newtheorem{remark}{Remark}
	\newcommand{\gain}{\gamma}
	\newcommand{\Gain}{\Gamma}
	\newcommand{\lb}{\left(}
	\newcommand{\rb}{\right)}
	\newcommand{\emax}[1]{\lambda_{\text{max}}\left( #1 \right)}
	\newcommand{\emin}[1]{\lambda_{\text{min}}\left( #1 \right)}
	\newcommand{\ub}{{\mathbf u}}
	\newcommand{\Ub}{{\mathbf U}}
	\newcommand{\Ubb}{{\bar{\mathbf U}}}
	\newcommand{\ubb}{{\bar{\mathbf u}}}
	\newcommand{\yb}{{\mathbf y}}
	\newcommand{\Yb}{{\mathbf Y}}
	\newcommand{\Ybb}{\bar{\mathbf Y}}
	\newcommand{\rbo}{{\mathbf r}}
	\newcommand{\Rb}{{\mathbf R}}
	\newcommand{\Rbb}{\bar{\mathbf R}}
	\newcommand{\vb}{{\mathbf v}}
	\newcommand{\Vb}{{\mathbf V}}
	\newcommand{\db}{{\mathbf d}}
	\newcommand{\Db}{{\mathbf D}}
	\newcommand{\wb}{{\mathbf w}}
	\newcommand{\Wb}{{\mathbf W}}
	\newcommand{\Wbb}{\bar{\mathbf W}}
	\newcommand{\zb}{{\mathbf z}}
	\newcommand{\Zb}{{\mathbf Z}}
	\newcommand{\Zbb}{\bar{\mathbf Z}}
	\newcommand{\Xb}{{\mathbf X}}
	\newcommand{\Xbb}{\bar{\mathbf X}}
	\newcommand{\Pb}{{\mathbf P}}
	\newcommand{\Gainb}{{\mathbf \Gain}}
	\newcommand{\Mb}{{\mathbf M}}
	\newcommand{\Mbb}{\bar{\mathbf M}}
	\newcommand{\Ubwh}{\widehat{\Ub}}
    \newcommand{\bfo}{{\bf1}}
    \newcommand{\covl}[2]{#1 \! \left\{ #2 \right\}}
	\newcommand{\matbegin}{\left[ }
	\newcommand{\matend}{\right] }
	\newcommand{\thbth}[9]{
	 	\begin{bmatrix}
	        	#1 & #2 & #3 \\
	                #4 & #5 & #6 \\
	        	#7 & #8 & #9
		\end{bmatrix}}
	\newcommand{\mattightbegin}{
	    \renewcommand{\baselinestretch}{1}
	    \renewcommand{\arraystretch}{.5}
	    \setlength{\arraycolsep}{0.15em}
	    \left[
	}
	\newcommand{\mattightend}{
	    \right]
	}
	\newcommand{\thbthtight}[9]{
	 \mattightbegin \begin{array}{ccc}
	        #1 & #2 & #3 \\
	                #4 & #5 & #6 \\
	        #7 & #8 & #9
	                \end{array}\mattightend}
	\newcommand{\R}{{\mathbb R}}
	\newcommand{\Z}{{\mathbb Z}}
	\newcommand{\cG}{{\cal G}}
	\newcommand{\cM}{{\cal M}}
	\newcommand{\cS}{{\cal S}}
	\newcommand{\cL}{{\mathbb L}}
	\newcommand{\gainb}{{\bm \gain}}
	    \newcommand{\expec}[1]{ {\mathbb E} \left[ #1 \right] }
	    \newcommand{\req}[1]{(\ref{#1.eq})}
	    \newcommand{\be}{\begin{equation}}
	    \newcommand{\ee}{\end{equation}}
	    \newcommand{\beas}{\begin{eqnarray*}}
	    \newcommand{\eeas}{\end{eqnarray*}}
	    \newcommand{\bea}{\begin{eqnarray}}
	    \newcommand{\eea}{\end{eqnarray}}
	   \newcommand{\diag}[1]{{\rm{diag}} \left( #1 \right)}
	   \newcommand{\Diag}[1]{{\rm{Diag}} \left( #1 \right)}
	   \newcommand{\tr}[1]{{\rm{tr}}\left( #1 \right)}
	   \newcommand{\vect}[1]{\mathrm{vec}\left( #1 \right)}
	   \newcommand{\bbm}{\begin{bmatrix}}
	   \newcommand{\ebm}{\end{bmatrix}}
\begin{document}

\title{An Input-Output Approach to \\ Structured Stochastic Uncertainty}


\author{Bassam~Bamieh,~\IEEEmembership{Fellow,~IEEE}, and Maurice Filo, \IEEEmembership{Member,~IEEE}%
				\thanks{
				B. Bamieh and M. Filo are with the department of Mechanical Engineering, the University of California at 
				Santa Barbara (UCSB). 
				
				This work is supported by NSF Awards ECCS-1408442.
				}
				}

\maketitle


\begin{abstract}
We consider linear time invariant systems with exogenous stochastic disturbances, and in feedback with structured stochastic uncertainties.  This setting encompasses linear systems with both additive and multiplicative noise. Our concern is to characterize second-order properties such as mean-square stability and performance.  A purely input-output treatment of these systems is given without recourse to state space models, and thus the results are applicable to certain classes of distributed systems. We derive necessary and sufficient conditions for mean-square stability in terms of the spectral radius of a linear matrix operator whose dimension is that of the number of uncertainties, rather than the dimension of any underlying state space models. Our condition is applicable to the case of correlated uncertainties, and reproduces earlier results for uncorrelated uncertainties. For cases where state space realizations are given, Linear Matrix Inequality (LMI) equivalents of the input-output conditions are given. 
\end{abstract}

\section{Introduction}
	Linear Time Invariant (LTI) systems driven by second order stochastic processes is a widely-used and powerful methodology for modeling and control of many physical systems in the presence of stochastic uncertainty. In the most well-known models, stochastic uncertainty enters the model equations additively. Linear systems with both additive and multiplicative stochastic signals are on the other hand  relatively less studied. This problem setting is important in the study of system robustness. While additive disturbances can represent uncertain forcing or measurement noise in a system, multiplicative disturbances are necessary to model uncertainty in system parameters and coefficients. When the multiplicative uncertainty is of the non-stochastic set-valued type, then the problem setting is the standard deterministic one of robust control~\cite{zhou1996robust}. The present paper is concerned with the stochastic multiplicative uncertainty setting, but the approach will appear to be closer to that of robust control compared to common stochastic treatments. 

Before commenting  on the background for the present work, a brief statement of the problem is given to allow for a more precise discussion. 	
	\begin{figure}[h]
		\begin{center}	
			\includegraphics[width=0.25\textwidth]{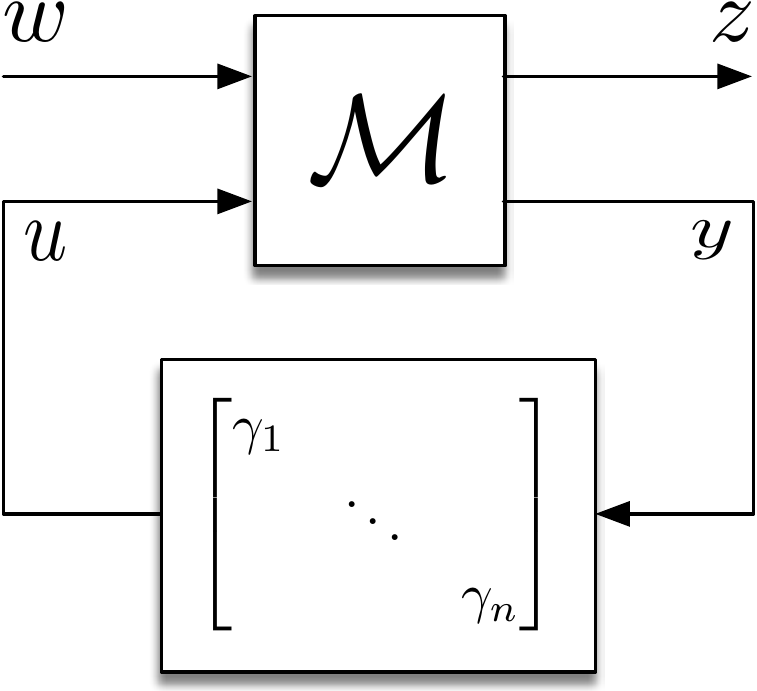}
		\end{center}
		\caption{The general setting of linear systems with both additive and multiplicative 
		stochastic disturbances. $\cM$ is a Linear Time Invariant (LTI) system, $w$ is a stationary stochastic 
		process that enters additively, while the multiplicative disturbances are modeled as a feedback through 
		 time-varying stochastic gains $\gain_1,\ldots,\gain_n$, 
		 represented here as a diagonal matrix acting on vector-valued
		 internal signals $u$ and $y$. The  signal $z$ represents an output whose variance quantifies a
		 performance measure.   }
	   \label{setting.fig}
	\end{figure}
	Figure~\ref{setting.fig} illustrates the setting  considered in this paper.  
	A Linear Time Invariant (LTI) 
	system $\cM$ is   in feedback with time-varying gains 
	$\gain_1$, $\ldots$, $\gain_n$. 
	These gains are random processes that are temporally independent, 
	but possibly mutually correlated. Another set of  stochastic disturbances are represented by the 
	vector-valued signal $w$ which enters additively, while the signal $z$ is an output whose variance 
	quantifies a performance measure. The feedback term is then a diagonal matrix with the individual 
	gains $\{\gain_i\}$ appearing on the diagonal. Such gains are commonly referred to as structured 
	uncertainties. 
	
	We should note the other common and related models in the literature which are usually formulated 
	in a state space setting. One such model is a linear system with a random ``A matrix'' such as 
	\be
		x(t+1) ~=~ A(t) x(t) ~+~ B w(t) , 
		\label{randA.eq}
	\ee
	where $A$ is a matrix-valued random process (with $A(t)$  independent of $\{x(\tau),\tau\leq t\}$). 
	Sometimes~\req{randA} can be rewritten in an alternative form using scalar-valued random 
	processes $\gain_1,\ldots,\gain_n$ ($n$ is not necessarily the dimension of $x$) 
	as follows 
	\be
		x(t+1) ~=~ \big( A_o+A_1\gain_1(t)+\cdots+A_n\gain_n(t)  \big)  x(t) ~+~ B w(t) .
		\label{randA_rewrite.eq}
	\ee
	Now this form can always be converted~\cite{zhou1996robust} to that of Figure~\ref{setting.fig}. The simplest case is when 
	the matrices $A_1,\ldots,A_n$ are all of rank 1, then each $\gain_i(t)$ in Figure~\ref{setting.fig} is a scalar 
	block, while otherwise, one would have  so-called repeated blocks. We refer the 
	reader to~\cite{zhou1996robust,packard1993complex} for this standard construction. 
	
%

The literature on systems with multiplicative noise goes back several decades. Early work considered models like~\req{randA},~\req{randA_rewrite}, but primarily in continuous time and using an Ito formulation. The primary tool~\cite{willems1976feedback} was to derive differential equations that govern the evolution of second order moments when the multiplicative noise is white, and  conditions for the asymptotic convergence of those deterministic equations are given in terms of solvability of certain Riccati-like equations. The case of colored multiplicative noise is less tractable since equations for the second order moments are not easy to obtain, although certain special classes have been studied~\cite{willems1975stability,blankenship1977stability}. For the white noise case however, more detailed analysis appeared in later work~\cite{boyd1994linear,el1995state}  which recast Mean Square Stability (MSS) conditions in terms of Linear Matrix Inequalities (LMIs). 
	
Another trend~\cite{elia2005remote,hinrichsen1995stability,el1992stability,lu2002mean} appeared in the 90's when systems with several sources of multiplicative noise were viewed in a similar manner to that of structured uncertainty common in robust control (as in the setting of Figure~\ref{setting.fig}). The interesting observation was made that MSS conditions can be given in terms of the spectral radius of a non-negative matrix of $H^2$ norms (the matrix of $H^2$ norms of the individual SISO subsystems of the MIMO  LTI system $\cM$). This criterion is analogous to necessary and sufficient conditions for robust stability to deterministic structured  time-varying uncertainty in both the $L^2$ and $L^\infty$-induced norms~\cite{dahleh1988necessary,megretski1993necessary,bamieh1993robust,doyle1982performance,khammash1991performance,shamma1994robust,fan1991robustness,doyle1985structured} settings. 

There are two  observations about the existing results~\cite{elia2005remote,hinrichsen1995stability,el1992stability,lu2002mean} that motivate the current work. The first is that although the final MSS conditions are stated in terms of input-output properties ($H^2$ norms), the arguments and proofs rely on state space realizations, LMIs related to those realizations, and scalings of those LMIs. Second, the existing results are for  multiplicative uncertainties that are mutually uncorrelated, and it is unclear how these arguments can be generalized to the correlated uncertainties case. It should be noted that the latter case is important for several applications. 

The aim of this paper is to provide a relatively elementary, and purely input-output treatment and derivation of the necessary and sufficient conditions for MSS and performance. In the process,  conditions  for the mutually correlated uncertainties case become transparent, as well as how special the uncorrelated case is. A new object is uncovered which can be termed the ``loop gain operator'', which acts on   covariances of signals in the feedback loop. We briefly describe this operator as a preview of the main result of this paper (the following statement is concerned only with MSS rather than performance, so the signal $w$ in Figure~\ref{setting.fig} is set to zero). Let $\Gainb$ be the mutual correlation matrix of the $\gain$'s, i.e. the matrix with $ij$'th entry 
$\Gainb_{ij} :=  \expec{ \gain_i(t) \gain_j^*(t)}$. Let $\{M_{22}(t)\}$ be the impulse response matrix sequence of the $\cM_{22}$ subsystem in Figure~\ref{setting.fig}, and define the {\em matrix-valued}  linear operator
\[
	\cL(X) ~:=~  \Gainb \circ  \left( 	\sum_{t=0}^\infty M_{22}(t) ~X~ M^*_{22}(t) \right), 
\]
where $\circ$ is the Hadamard (element-by-element) product of matrices. Note that it operates on matrices $X$ whose dimensions are the number of uncertainties, and not of any underlying state space realization. 
The operator $\cL$  is called the {\em loop gain operator} because it captures what happens to the covariance matrix of a signal as one goes ``once around the feedback loop'' in Figure~\ref{setting.fig} in the statistical steady state. 
The eigenvalues and ``eigen-matrices'' of this operator characterize MSS as well as the fastest growing second order statistics of the signals in the loop when mean square stability is lost. 
An examination of this operator shows that 
 in the more general setting $\Gainb\neq I$,  MSS conditions require not only calculations of $H^2$ norms, but also inner products of the various subsystems' impulse responses (of which the $H^2$ norms of subsystems are a special case). 
 The operator $\cL$ has several nice properties including monotonicity (preserving the semi-definite ordering on matrices), and consequently a Perron-Frobenius theory for its spectral radius and the associated eigen-matrix. These properties are described and exploited in the sequel.

	This paper is organized as follows. Section~\ref{prelims.sec} establishes   preliminary 
	results that are needed for the subsequent structured uncertainty analysis. One sided random
	 processes and their associated covariance sequences are defined. In addition, we provide 
	a natural input-output definition of Mean-Square Stability (MSS) in feedback systems. 
	The main tool we use is to convert the stochastic feedback system of Figure~\ref{setting.fig}
	to a deterministic feedback system that operates on matrix-valued signals, namely the covariance 
	sequences of all the signals in the feedback loop (Figure~\ref{M_Delta_cov.fig} provides a cartoon of
	this). LTI systems that operate on convariance-matrix-valued signals have some nice monotone properties
	that significantly simplify the proofs. We pay special attention to establishing these monotone properties.
	Our main results on MSS are established in Section~\ref{stab.sec} where we begin 
	with the simple case of SISO unstructured stochastic uncertainty. This case illustrates how 
	small-gain arguments
	similar to those used for deterministic perturbations~\cite{desoer2009feedback} 
	can be used to establish necessary and sufficient 
	MSS conditions. We then consider the structured uncertainty case, introduce 
	the loop gain operator and show that it captures the exact necessary 
	and sufficient structured
	small gain condition for MSS. Section~\ref{special.sec} examines this loop gain operator in the 
	general case as well as several special cases. We reproduce earlier results when the uncertainties 
	are uncorellated, and derive conditions for repeated uncertainties. 
	Finally, Section~\ref{perf.sec} treats the performance problem and Section~\ref{ss.sec} translates 
	our conditions to state space formulae whenever such realizations are available. 
	These can be useful for explicit computations, and in particular we provide a power iteration algorithm for 
	calculating the largest eigenvalue and corresponding eigen-matrix of the loop gain operator. 
	 We close with some remarks and 
	comments about further research questions in Section~\ref{conc.sec}.

\section{Preliminaries and Basic Results} 			\label{prelims.sec}

	All the signals  considered are  defined on the 
	half-infinite, discrete-time interval   $\Z^+ := \{0,1, \ldots \}$. The dynamical systems  
	considered are maps between various signal spaces over the time interval $\Z^+$. This is done in 
	contrast with the standard stationary stochastic processes setting over $\Z$ 
	since stability arguments involve the growth of signals
	starting from some initial time. 
	
	A stochastic process $u$ is a one-sided sequence of random variables $\{u_t; ~t\in\Z^+\}$. 
	The notation $u_t := u(t)$ is used whenever no confusion can occur due to the presence of other 
	indices. 
	Without loss of generality we assume all processes to be zero mean. 
	For any process, its 
	 {\em instantaneous variance sequence} 
	 $\ub_t := \expec{u_t^* u_t^{} }$ is denoted by small bold font, 
	 and its  {\em instantaneous covariance matrix sequence} 
	$\Ub_t =  \expec{u_t u_t^*}$ is denoted by capital bold font. 
 	The entries of $\Ub_t$ are  mutual correlations of the components of the vector $u_t$, and are sometimes 
	referred to as {\em spatial correlations}. 
	Note that $\ub_{t}  = \tr{ \Ub_t }$.

	A process $u$ is termed {\em second order} if it has finite covariances $\Ub_t$ for each 
	$t\in\Z^+$. A processes is termed {\em white} if it is uncorrelated at any two distinct times, i.e. 
	$ \expec{u_t u_\tau^*}=0$ if $t\neq \tau$. 
	Note that in the present context, a white processe $u$ may still have spatial correlations, i.e. its 
	instantaneous correlation matrix $\Ub_t$ need not be the identity.
	A process $u$ is termed {\em temporally independent} if 
	$u_t$ and $u_\tau$ are independent when $t\neq \tau$. 
	 Although the processes considered in this paper are technically not stationary (stationary processes
	are defined over the doubly infinite time axis), it can be shown that they are asymptotically stationary
	in the sense that their statistics become approximately stationary in the limit of large time, or 
	quasi-stationary in the terminology of 
	\cite{ljung1999system}. This fact is not used in the present treatment  and the preceding comment is only 
	included for clarification. 
	
\subsection*{Notation Summary} 

	\subsubsection{Variance and Covariance Sequences} 

	A stochastic process is a zero-mean,  one sided sequence of vector-valued 
	random variables $\{u_t; ~t\in\Z^+\}$.
	\begin{itemize} 
		\item The {\em variance sequence} of $u$ is
			\[  	\ub_t ~:=~ \expec{u_t^*u_t}. 	\]
		\item The {\em covariance sequence} of $u$ is
			\[  	\Ub_t ~:=~ \expec{u_t u^*_t}. 	\]		
		\item When it exists, the asymptotic {\em limit} of a covariance sequence is denoted by an overbar 
			\be	
				\Ubb ~:=~ \lim_{t\rightarrow\infty} \Ub_t,
			  \label{bar_nota.eq}
			\ee
			with similar notation for  variances  $ \ubb ~:=~ \lim_{t\rightarrow\infty} \ub_t$.
	\end{itemize}
	We use calligraphic letters $\cM$ to denote  LTI systems as operators, and capital letters $\{ M_t\}$ to denote 
	elements of  their matrix-valued impulse response sequences, i.e. $y=\cM u$ is operator notation for
	$y_t = \sum_{l=0}^t M_{t-\tau} u_\tau$. 
	\begin{itemize} 
		\item If $\cM$ has finite $H^2$ norm, then the limit 
			 of the output covariance $\Ybb$ when the input is white and has a covariance 
			sequence with a limit $\Ubb$ is denoted by 
			\[
			\hspace*{-1em}
			\begin{array}{cccl}
				\Ybb & = &   \covl{\cM}{ \Ubb}		&	\\ 
				 \rotatebox[origin=c]{-90}{$:=$} & & \rotatebox[origin=c]{-90}{$:=$}	 \rule{0em}{1.2em}	&	\\ 
				\displaystyle \lim_{t\rightarrow\infty} \Yb_t  & &  
					\displaystyle	\sum_{\tau=0}^\infty M_\tau \Ubb M^*_\tau  & =~ 
							\displaystyle	\lim_{t\rightarrow\infty} \sum_{\tau=0}^t M_\tau \Ub_{t-\tau} M^*_\tau .
			\end{array}
			\] 
			Note that $\covl{\cM}{.}$ is  a matrix-valued linear operator. 
		\item The response to spatially uncorrelated white noise is denoted by 
			\[ 	\textstyle
				  \covl{\cM }{ I }	 
						~=~ 	\sum_{\tau=0}^\infty M_\tau M^*_\tau  ~=:~ \Mbb
			\]
			
	\end{itemize}

	\subsubsection{The Hadamard Product} 
	
	For any vector $v$ (resp. square matrix $V$),  $\Diag{v}$ (resp.  $\Diag{V}$ )
	 denotes  the diagonal matrix with diagonal entries equal to those 
	 of $v$ (resp. $V$). For any square matrix $V$,  $\diag{V}$ is the {\em vector}
	 with entries equal to the diagonal entries of $V$.

	The Hadamard, or element-by-element product of two matrices $A$ and $B$ is denoted by $A\circ B$. 
	We will  use the notation 
	\[
		A^{\circ 2} ~:=~ A\circ A 
	\]
	for the element-by-element matrix square. Note that with this notation, for any matrix $V$  
	\[
			I \circ V ~=~ \Diag{V} . 
	\]
	A matrix-valued operator which  will be needed is 
	\be
		F(V) ~:=~  I \circ \big( A V A^* \big)  ~=~ \Diag{A V A^*} .
	  \label{hadop1.eq}
	\ee	
	In particular, we will need to characterize its action  
	 on diagonal matrices which is easily shown to be 
	\be
		\Diag{ A~ \Diag{v} ~A^* } ~=~  \Diag{A^{\circ 2} v }. 
	  \label{hadop2.eq}
	\ee
	In other words, if $V=\Diag{v}$ is diagonal, then the diagonal part of $AVA^*$ 
	as a vector is simply the matrix-vector product of $A^{\circ 2}$ with $v$.

\subsection{Input-Output Formulation of Mean Square Stability (MSS)} 

	Let $\cM$ be a causal linear time invariant (MIMO) system. The system $\cM$ is completely 
	characterized by its impulse response, which is a matrix valued sequence
	$\{ M_t; ~t\in \Z^+\}$. 
	The action of $\cM$ on an input signal $u$ to produce an output signal $y$ 
	is given by the convolution sum
	\be
		y_t ~=~ \sum_{\tau=0}^t M_{t-\tau} ~u_\tau , 
	  \label{Gconv.eq}
	\ee
	where without loss of generality,  zero initial conditions are assumed. 
	
	If the input $u$ is a zero-mean, second-order stochastic process, then it is clear from~\req{Gconv} that 
	$y_t$ has finite covariance for any $t$, even in the cases where this covariance may grow unboundedly 
	in time. 
	If $u$ is in addition white, then  the following calculation is standard 	
	\begin{eqnarray}
		\Yb_t & = &  
			\expec{	\lb \sum_{\tau=0}^t M_{t-\tau} u_\tau  \rb	\lb \sum_{r=0}^t u_r^* M^*_{t-r}  \rb }
																			\nonumber	\\
			& = & \sum_{\tau=0}^t \sum_{r=0}^t 	  M_{t-\tau} ~ \expec{ u_\tau u_r^*}   M^*_{t-r}	\nonumber	\\
		\Yb_t & = &   \sum_{\tau=0}^t    M_{t-\tau}  ~\Ub_\tau~ M^*_{t-\tau} 	.
														   \label{outincorr.eq}	
	\end{eqnarray}
	Note that this is a matrix convolution which relates the  instantaneous covariance sequences of the output and white input. 
	For SISO systems, this relation simplifies to 
	\be
		\yb_t ~=~ \sum_{\tau=0}^t M^2_{t-\tau}~ \ub_\tau. 
	  \label{siso_variances.eq} 
	\ee	
	For systems with a finite number of inputs and outputs, taking the trace of~\req{outincorr} gives 
	\begin{eqnarray}
		\yb_t & = & \tr{ \Yb_t   } ~=~    \sum_{\tau=0}^t    \tr{  M_{t-\tau}  ~\Ub_\tau~ M^*_{t-\tau} 	} \label{rhobyk.eq} \\
			& = &  \sum_{\tau=0}^t    \tr{  M^*_{t-\tau}   M_{t-\tau}  ~\Ub_\tau	}					  	\nonumber  \\
			& \leq &  \sum_{\tau=0}^t    \tr{  M^*_{t-\tau}   M_{t-\tau} } ~ \tr {\Ub_\tau	}						\nonumber \\
		\yb_t & \leq & \lb  \sum_{\tau=0}^\infty    \tr{  M^*_{t-\tau}   M_{t-\tau} } \rb    \lb  \sup_{0\leq t<\infty} \ub_t \rb  .
																				\label{linftybound.eq}
	\end{eqnarray}
	where the first inequality holds because for any two positive semidefinite matrices $A$ and $B$ we have ${\tr{AB} \leq \tr{A} \tr{B}}$ \cite{coope1994matrix}.
	The above calculation motivates the 
	 following input-output definition of Mean-Square Stability. 
	\begin{definition} 
		A causal linear time invariant system $\cM$ is called {\em Mean-Square Stable} (MSS) if for each 
		white, second-order  input process $u$ with uniformly bounded variance, 
		the  output process $y=\cM u$ has 
		uniformly bounded variance 
		\be
			\yb_t ~:=~ \expec{y_t^* y_t} ~\leq ~c  ~\left( \sup_\tau \ub_\tau \right),
		   \label{mssdef.eq}
		\ee
		with $c$ a constant independent of $t$ and the process $u$.
	\end{definition}	
	Note that the first term in~\req{linftybound} is just the $H^2$ norm of $\cM$
	\[
		\| \cM \|_2^2 ~:=~  \sum_{t=0}^\infty \tr{M_t M_t^*}. 
	\]
	The second term  is an $\ell^\infty$ norm 
	on the variance sequence, and thus the  bound in~\req{linftybound} can be compactly rewritten as
	\be
		\|\yb\|_\infty ~\leq~ \|\cM\|_2^2 ~\|\ub\|_\infty .
	   \label{varboundnorms.eq}
	\ee
	
	From~\req{rhobyk}, it is easy to see that equality in~\req{varboundnorms} holds when $u$ has 
	constant identity covariance ($\Ub_\tau = I$). Conversely, 
	if $\cM$ does not have finite $H^2$ norm, this input causes 
	  $\yb_t$ to
	 grow unboundedly. Thus a system is mean-square stable if and only if it has finite $H^2$ 
	 norm\footnote{It must be emphasized that this conclusion holds only if MSS is defined with 
	 as boundedness of variance sequences when the input is white.
	As is 
	 well-known from the theory of stationary stochastic processes, the instantaneous variance of a signal 
	 is the integral of its Power Spectral Density (PSD). The integrability of the output PSD cannot be concluded 
	 from the 
	 integrability of the system's magnitude squared response unless the input has a flat PSD (i.e. white). 
	 Thus for colored inputs, the boundedness of the output variance sequence cannot be concluded from only 
	  the $H^2$ norm.}. 
	 
	 For systems that have a finite $H^2$ norm, the output covariance sequence has a steady state limit when 
	 the input covariance sequence does. More precisely let $y=\cM u$, and
	 the input $u$ be such that 
	 \[
	 	\lim_{t\rightarrow\infty} \Ub_t ~=:~ \Ubb
	\]
	exists. Then if $\cM$ has finite $H^2$ norm, it follows that the output covariance has the limit 
	\be
		\Ybb ~:=~ \lim_{t\rightarrow\infty} \Yb_t ~=~ \sum_{\tau=0}^\infty M_\tau~\Ubb~M_\tau^* ~=:~ \covl{\cM }{ \Ubb } .
	  \label{covlim.eq}
	\ee
	For covariance sequences with a well-defined limit, the overbar bold capital notation is used for 
	the limit value as above. Also as above, the notation $\covl{\cM }{ \Ubb }$ is used for the 
	steady state  output covariance of an LTI system $\cM$ with input that has steady state 
	covariance of $\Ubb$. 
	When $\Ubb = I$, the following compact notation is used 
	\[
		\Mbb ~:=~ \covl{\cM }{ I } . 
	\]
	Thus $\Mbb$ is the steady state covariance of the output of an LTI system $\cM$ when the input is white 
	and has a steady state  covariance of identity.

\subsection{Mean-Square Stability of Feedback Interconnections} 
	
	The input-output setting for mean-square stability of a feedback interconnection can be 
	motivated using the conventional scheme~\cite{desoer2009feedback}
	of injecting exogenous disturbance signals into all loops.	
	\begin{definition}
	Consider the feedback system of Figure~\ref{feedback_MSS.fig} with $d$ and $w$ being
	white second order processes, and $\cM$ and $\cG$  causal LTI systems.  
	The feedback system in Figure~\ref{feedback_MSS.fig} is called mean-square stable if all signals $u$, 
	$y$, $v$ and $r$ have uniformly bounded variance sequences, i.e. if there exists a constant $c$
	such that 
	\begin{eqnarray*}
		\lefteqn{
		\max \left\{ \|\ub \|_\infty , \|\yb \|_\infty , \|\vb \|_\infty , \|\rbo \|_\infty \right\}	}
		~~~~~~~~~~~~~~~~~~~~~~& & 		\\ 
		& & ~\leq~ c ~\min\left\{ \|\db \|_\infty , \|\wb \|_\infty \right\}.
	\end{eqnarray*}
	\end{definition}
	\begin{figure}[h]
		\centering
			\includegraphics[width=0.35\textwidth]{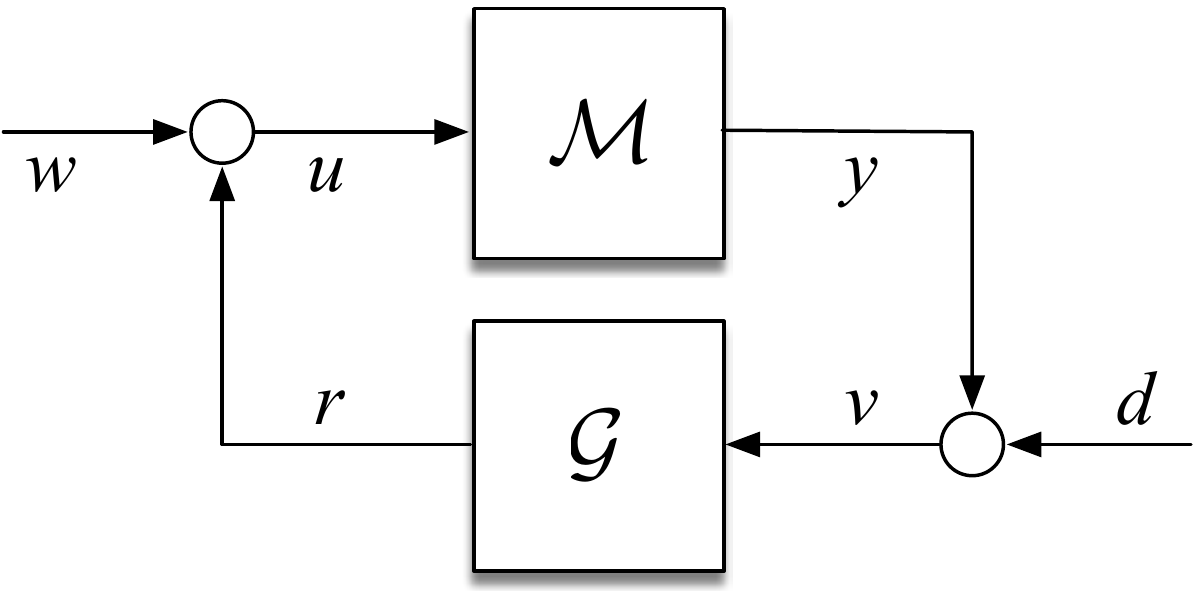}	
		\caption{The definition of mean-square stability for a feedback interconnection. The  exogenous
		disturbance signals are white random processes, and the requirement is that all signals in the loop 
		have uniformly bounded variance sequences.} 
	   \label{feedback_MSS.fig}
	\end{figure}

	\begin{remark} 										\label{colored.remark}
		A standard argument implies that the feedback interconnection is MSS iff the four mappings 
		$(I-\cM \cG)^{-1}$, $\cG(I-\cM \cG)^{-1}$, $(I-\cM\cG)^{-1}\cM$ and  $\cM\cG(I-\cM\cG)^{-1}$
		have finite $H^2$ norms. In general, it is not possible to bound those closed-loop norms in terms
		of only the $H^2$ norms of $\cM$ and $\cG$. In other words, it is not generally possible to carry 
		out a small-gain type analysis of the feedback system of Figure~\ref{feedback_MSS.fig} using only 
		$H^2$ norms. Another way to see this is that bounds like~\req{varboundnorms} are not directly 
		applicable to Figure~\ref{feedback_MSS.fig} since the signals $u$ and $v$ will not in general be 
		white.
	\end{remark} 

	Despite the above remark, in the present paper the concept of feedback stability is used when one 
	of the subsystems is a temporally independent multiplicative uncertainty. As will be seen, this 
	 has the effect of ``whitening'' (temporally de-correlating) the signal going through it,
	thus enabling a type of small-gain analysis.

\subsection{Stochastic Multiplicative Gains} 

	The MSS stability problem considered in this paper is for systems of the structure depicted 
	in Figure~\ref{M_Delta.fig},  where 
	\[
		\Gain(t) ~:=~ \Diag{  \rule{0em}{1em} \gain_1(t), ~\ldots,~ \gain_n(t) } 
	\]
	is a diagonal matrix of time-varying scalar stochastic gains acting on the vector signal $v$  
	\[
		r_t ~=~ \Gain_t ~v_t, 
	\]
	and $\cM$ is a strictly causal LTI system. 
	Without loss of generality, $\Gain$ can be assumed to be a zero mean process as the mean value can be absorbed 
	into the known part of the dynamics. The assumptions we make on the additive and multiplicative uncertain signals 
	are:
	 
	 \noindent
	 \textit{Assumptions on $d$, $w$  and $\Gain$} 
		\begin{enumerate} 
			\item The random variables $\{d_t\}$, $\{w_t\}$ and $\{\Gain_t\}$ are all mutually independent. 
			\item Both $d$ and $w$ have non-decreasing covariance sequences, i.e. 
				\[
					t_2\geq t_1 ~~\Longrightarrow~~ \Db_{t_2}\geq\Db_{t_1}, ~\Wb_{t_2}\geq\Wb_{t_1}.
				\]
		\end{enumerate} 
	
	The first assumption above
	 on the mutual independence of the $\Gain$s is crucial to the techniques used in this paper. 
	Note however that 
	for any one time $t$, individual components of $\Gain_t$ maybe correlated, and that is referred to as {\em spatial
	correlations} which can be characterized as follows. 
	 Let $\gain(t)$ 
	denote the vector 
	\[
		\gain(t) ~:=~ \bbm  \gain_1(t) & \cdots &  \gain_n(t) \ebm^* .
	\]
	The instantaneous
	correlations of the $\gain$'s can be expressed with the matrix 
	\be
		\Gainb ~:=~ \expec{ \rule{0em}{1em} \gain(t) ~\gain^*(t)}, 
	   \label{spatialcor.eq}
	\ee
	which is assumed to be independent of $t$.

	\begin{figure}[h]
		\begin{center}	\includegraphics[width=.3\textwidth]{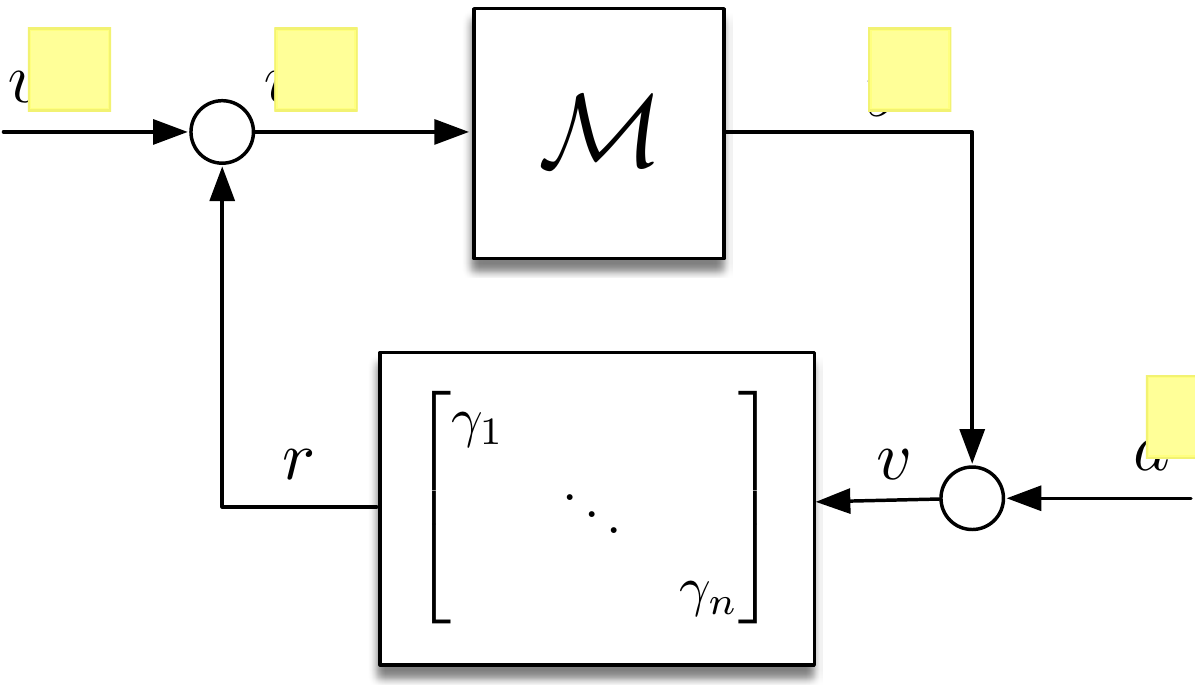}	\end{center}
		\caption{A strictly causal LTI system $\cM$ in feedback with multiplicative stochastic gains $\gain_i$'s. 
				The exogenous stochastic signals $d$ and $w$ are injected to test MSS of the feedback system, 
				which holds when all internal loop signals $u$, $y$, $v$ and $r$ have bounded variances sequences.} 
	   \label{M_Delta.fig}
	\end{figure}

	The mutual independence (in time) of the perturbations $\Gain$ and the strict causality of $\cM$ have  
	 implication for the dependencies of the various signals in the loop on the $\Gain$s. 
	 This is expressed in the following lemma whose proof is 
	found in the appendix. 
	\begin{lemma} 							\label{pastpresent.lemma}
		In the feedback diagram of Figure~\ref{M_Delta.fig}, assume $\cM$ is strictly causal, and 
		that  $\Gain_t$ and $\Gain_\tau$ are independent for $t\neq\tau$. Then we have 
		\begin{enumerate} 
			\item Past and present values of $v$ and $y$ are independent of present 
				and future values of $\Gain$, i.e. 
				\be 
					\begin{array}{lll}
						\Gain_t, ~ y_\tau,  &\tau\leq t,  &~~~\mbox{are independent}		\\
						\Gain_t, ~ v_\tau,  &\tau\leq t,  &~~~\mbox{are independent}
					\end{array}
				  \label{indep_yv.eq}	
				\ee
			\item Past values of $r$ and $u$ are independent of present and future values of $\Gain$, i.e. 
				\be 
					\begin{array}{lll}
						\Gain_t, ~ r_\tau,  &\tau< t,  &~~~\mbox{are independent}		\\
						\Gain_t, ~ u_\tau,  &\tau< t,  &~~~\mbox{are independent}
					\end{array}
				  \label{indep_ru.eq}	
				\ee
		\end{enumerate} 
	\end{lemma}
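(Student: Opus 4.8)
The plan is to track, for each loop signal, exactly which of the primitive noise variables $\{\Gain_s\}$, $\{w_s\}$, $\{d_s\}$ it is a measurable function of, and then to read off the independence claims from the mutual independence of those primitives. Reading off the interconnection relations in Figure~\ref{M_Delta.fig}, the loop signals take the form $y=\cM u$, $v = y + w$, $r = \Gain v$ and $u = r + d$, where strict causality of $\cM$ means $M_0=0$, so that $y_t = \sum_{\tau=0}^{t-1} M_{t-\tau} u_\tau$ involves $u$ only at times strictly before $t$. This absence of a direct feedthrough term is what prevents an instantaneous algebraic loop, and it is the structural fact that makes the $\tau=t$ cases of~\req{indep_yv} possible: were $\cM$ merely causal, $v_t$ would depend on $\Gain_t$ and the statement would fail.

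First I would prove, by induction on $t$, the single book-keeping statement that $u_t$ is a measurable function of $(\Gain_0,\ldots,\Gain_t)$, $(w_0,\ldots,w_t)$ and $(d_0,\ldots,d_t)$. The base case $u_0 = \Gain_0 w_0 + d_0$ holds since $y_0=0$ (empty sum) and $v_0 = w_0$. For the inductive step, $y_t = \sum_{\tau=0}^{t-1} M_{t-\tau} u_\tau$ depends only on $u_0,\ldots,u_{t-1}$, each of which by hypothesis depends only on noise variables indexed by times $\le t-1$; hence $y_t$, and therefore $v_t = y_t + w_t$, $r_t = \Gain_t v_t$ and $u_t = r_t + d_t$, are functions of noise variables indexed by times $\le t$. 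Collecting the outcome of this induction yields the four dependency facts I actually need: $y_\tau$ and $v_\tau$ are functions of $\Gain$-variables with index $\le \tau-1$ (together with the $w$'s and $d$'s), while $r_\tau$ and $u_\tau$ are functions of $\Gain$-variables with index $\le \tau$.

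The conclusions then follow by comparing indices. For~\req{indep_yv} with $\tau\le t$, the signal $y_\tau$ (resp. $v_\tau$) is built only from $\Gain_s$ with $s\le\tau-1\le t-1$, together with the $w$'s and $d$'s; since $\Gain_t$ is independent of the entire collection $\{\Gain_s: s\neq t\}$ and of $\{w_s\},\{d_s\}$, it is independent of any measurable function of them, hence of $y_\tau$ and of $v_\tau$. For~\req{indep_ru} with $\tau<t$, the signals $r_\tau$ and $u_\tau$ involve $\Gain_s$ only for $s\le\tau\le t-1$, so the identical argument applies.

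The step I expect to be the real obstacle is not the index bookkeeping but the measure-theoretic content of the phrase ``$\Gain_t$ is independent of a function of $\{\Gain_s:s\neq t\}$.'' Because $r_\tau=\Gain_\tau v_\tau$ is a product of two \emph{random} quantities and $v_\tau$ itself carries the earlier gains $\{\Gain_s:s<\tau\}$, marginal (pairwise) temporal independence of the $\Gain$'s does not suffice; I need $\Gain_t$ to be independent of the whole past block $(\Gain_0,\ldots,\Gain_{t-1})$ jointly, and jointly of $\{w_s\}$ and $\{d_s\}$. This is precisely the force of the standing mutual-independence assumption on $d$, $w$ and $\Gain$, read as joint rather than merely pairwise independence across time, and once it is invoked the elementary implication ``$X$ independent of $\sigma(Z)$ $\Rightarrow$ $X$ independent of $g(Z)$'' closes the argument. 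I would therefore isolate this as the one place where the hypotheses are used in full strength, leaving the remainder a routine causal induction.
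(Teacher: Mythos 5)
Your proof is correct and follows essentially the same route as the paper's: both unroll the feedback loop in time (you by a causal induction, the paper by inverting the finite-horizon block-lower-triangular matrix representation of $I-\cM\Gain$) to show that $y_\tau$ and $v_\tau$ depend only on $\Gain_s$ for $s\le\tau-1$ while $r_\tau$ and $u_\tau$ depend on $\Gain_s$ for $s\le\tau$, and then invoke the mutual independence of the primitive noise variables. Note only that in the paper's configuration the summing junctions are $u=r+w$ and $v=y+d$ (you swapped the roles of $w$ and $d$), which is immaterial here since both exogenous inputs are independent of $\Gain$, and your observation that the standing assumption supplies the needed \emph{joint} independence (not merely the pairwise independence stated in the lemma's hypothesis) is a fair and accurate reading of how the paper actually uses its assumptions.
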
 

	An important consequence of these relations  is that even if the
	input signal $v$ may in general be colored, multiplication by the $\Gain$s will cause the output 
	$r$ to be white. This can be seen from 
	\begin{eqnarray*}
		\expec{r_t r^*_\tau}  
		& = &   \expec{ \Gain_t v_t   v^*_\tau  \Gain^*_\tau} 		\\
		& = &   \expec{ \Gain_t }~ \expec{ v_t v^*_\tau    \Gain^*_\tau }  = 0,~~~~~~ \tau< t , 
	\end{eqnarray*}
	where the second equality follows from~\req{indep_yv}, i.e. the independence of $\Gain_t$ from $v_t$, $v_\tau$, 
	and $\Gain_\tau$ respectively. A similar argument shows that $r$ is uncorrelated with present and past values of 
	$v$ and $y$, and uncorrelated with past values of $u$, but we will not need these facts in the sequel.

	To calculate the instantaneous spatial correlations of $r$
	\begin{eqnarray}
		\expec{r_t r_t^*} 
		& = &  \expec{ \Gain_t v_t   v^*_t  \Gain^*_t} 				\nonumber		\\ 
		 & = &  \expec{ \Gain_t ~\left( \expec{v_t   v^*_t} \right)  ~\Gain^*_t} ,	\label{DMD.eq}
	\end{eqnarray}
	where the last equality follows from the independence of $v_t$ and $\Gain_t$ and 
	formula~\req{mat_indep} in Appendix~\ref{indep.appen}. 
	It is thus required to calculate quantities like
	$\expec{\Gain M \Gain}$  for some  constant matrix $M$. The case of diagonal $\Gain$ reduces to 
	\begin{eqnarray*}
		\expec{\Gain M \Gain^*} 
		& = & 		
		\expec{ \thbthtight{\gain_1}{}{}{}{\ddots}{}{}{}{\gain_n} 
		M   \thbth{\gain_1}{}{}{}{\ddots}{}{}{}{\gain_n}   }				\\
		& = & 
		\left[ \rule{0em}{1em}  m_{ij}~ \expec{\gain_i \gain_j} \right] ~=~ 
		\Gainb \circ M ,		
	\end{eqnarray*}
	which is the  Hadamard (element-by-element) product of $\Gainb $  and $ M$.
	Applying this to~\req{DMD}, the above arguments lead to the following conclusion. 
	\begin{lemma} 							\label{delta.lemma}
	Consider  the feedback system of Figure~\ref{M_Delta.fig} with $\cM$ a strictly causal LTI system, and $\Gain$ 
	  diagonal stochastic perturbations 
	  with spatial correlations~\req{spatialcor}.
	If the perturbations $\Gain$ are temporally independent, 
	then the output $r$ is a white process with 
	 instantaneous 
	spatial correlations  given by 
	\be
		\Rb_t
		~=~  \Gainb \circ \Vb_t , 
	   \label{hadamard.eq}	
	\ee 
	 the  Hadamard (element-by-element) product of $\Gainb $  and $\Vb_t$.
	\end{lemma}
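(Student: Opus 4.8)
The plan is to prove the two assertions of the lemma separately---first that $r$ is temporally white, and then that its instantaneous spatial covariance is the Hadamard product $\Gainb\circ\Vb_t$---drawing almost entirely on the independence relations already recorded in Lemma~\ref{pastpresent.lemma} together with the matrix independence identity~\req{mat_indep} of Appendix~\ref{indep.appen}.

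For whiteness, I would fix $t\neq\tau$; since the cross-covariances $\expec{r_t r_\tau^*}$ and $\expec{r_\tau r_t^*}$ are conjugate transposes of one another, it is enough to treat $\tau<t$. Writing $r_t=\Gain_t v_t$ and $r_\tau=\Gain_\tau v_\tau$ gives $\expec{r_t r_\tau^*}=\expec{\Gain_t\, v_t v_\tau^*\,\Gain_\tau^*}$. The crux is that Lemma~\ref{pastpresent.lemma} makes $\Gain_t$ independent of $v_t$ and $v_\tau$ (both indexed by times $\leq t$) and of $\Gain_\tau$ (because $\tau<t$), hence independent of the whole trailing factor $v_t v_\tau^*\Gain_\tau^*$. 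Factoring the expectation and using the zero-mean hypothesis $\expec{\Gain_t}=0$ then collapses the product to zero.

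For the spatial covariance I would start from $\Rb_t=\expec{r_t r_t^*}=\expec{\Gain_t\, v_t v_t^*\,\Gain_t^*}$. Because $\Gain_t$ is independent of $v_t$ by the $\tau\leq t$ clause of Lemma~\ref{pastpresent.lemma}, the matrix identity~\req{mat_indep} permits replacing the random inner factor $v_t v_t^*$ by its mean $\Vb_t$, yielding $\Rb_t=\expec{\Gain_t\,\Vb_t\,\Gain_t^*}$. The remaining step is purely algebraic: for diagonal $\Gain_t$ the $(i,j)$ entry of $\Gain_t M\Gain_t^*$ equals $\gain_i\, m_{ij}\,\gain_j^*$, so taking expectations and invoking the definition~\req{spatialcor} of $\Gainb$ gives $(\Rb_t)_{ij}=\Gainb_{ij}\,(\Vb_t)_{ij}$, which is exactly $\Rb_t=\Gainb\circ\Vb_t$.

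The one step I would watch most carefully is the replacement of $v_t v_t^*$ by $\Vb_t$ inside the expectation. This is not a scalar independence fact but the matrix-valued statement $\expec{\Gain_t\, v_t v_t^*\,\Gain_t^*}=\expec{\Gain_t\,\expec{v_t v_t^*}\,\Gain_t^*}$, and it hinges on $\Gain_t$ being independent of the entire random matrix $v_t v_t^*$; this is precisely what~\req{mat_indep} delivers once Lemma~\ref{pastpresent.lemma} supplies independence of $\Gain_t$ and $v_t$. Everything else reduces to factoring expectations of independent quantities and to the diagonal-matrix computation, both of which are routine.
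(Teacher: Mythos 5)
Your proposal is correct and follows essentially the same route as the paper: whiteness via $\expec{r_t r_\tau^*}=\expec{\Gain_t}\expec{v_t v_\tau^*\Gain_\tau^*}=0$ using the independence relations of Lemma~\ref{pastpresent.lemma}, and the spatial covariance via the matrix identity~\req{mat_indep} followed by the entry-wise computation $\expec{\Gain_t M\Gain_t^*}=[m_{ij}\expec{\gain_i\gain_j}]=\Gainb\circ M$. The step you flag as delicate is exactly the one the paper also singles out, and it is justified the same way.
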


	Two special cases are worth noting. If $\Gain=\gain$ is a scalar perturbation, then~\req{hadamard} 
	reduces to 
	\be
		\rbo_t ~=~ \Gainb ~ \vb_t. 
	  \label{siso_delta_variances.eq}
	\ee
	Thus multiplication by a scalar perturbation simply scales the variance of the input signal and ``whitens'' it. 
	In the special case where the perturbations are uncorrelated and all have unit variance, 
	i.e. $\Gainb ~=~ I$, a simple expression results 
	\[
		\Rb_t 
		~=~ \diag{ \Vb_t}, 
	\] 	
	where $ \diag{ \Vb_t}$ is a diagonal matrix made up of the diagonal entries of the matrix 
	$\Vb_t$. Thus 
	if the $\gain$'s are white and mutually uncorrelated, then the  vector output signal $r$ is temporally 
	and spatially uncorrelated even though $v$ may have both types of correlations. In other words, 
	a structured perturbation with uncorrelated components will spatially and temporally whiten its input.

\subsection{Covariance Feedback System} 							\label{cov.subsec}

	An important tool used in this paper is to replace the analysis of the original
	stochastic system of Figure~\ref{M_Delta.fig}
	with an equivalent 
	system that operates on the respective signals'  instantaneous covariance matrices. 
	This deterministic system 
	is depicted in Figure~\ref{M_Delta_cov.fig}. 
	\begin{figure}[h]
		\begin{center}	\includegraphics[width=.47\textwidth]{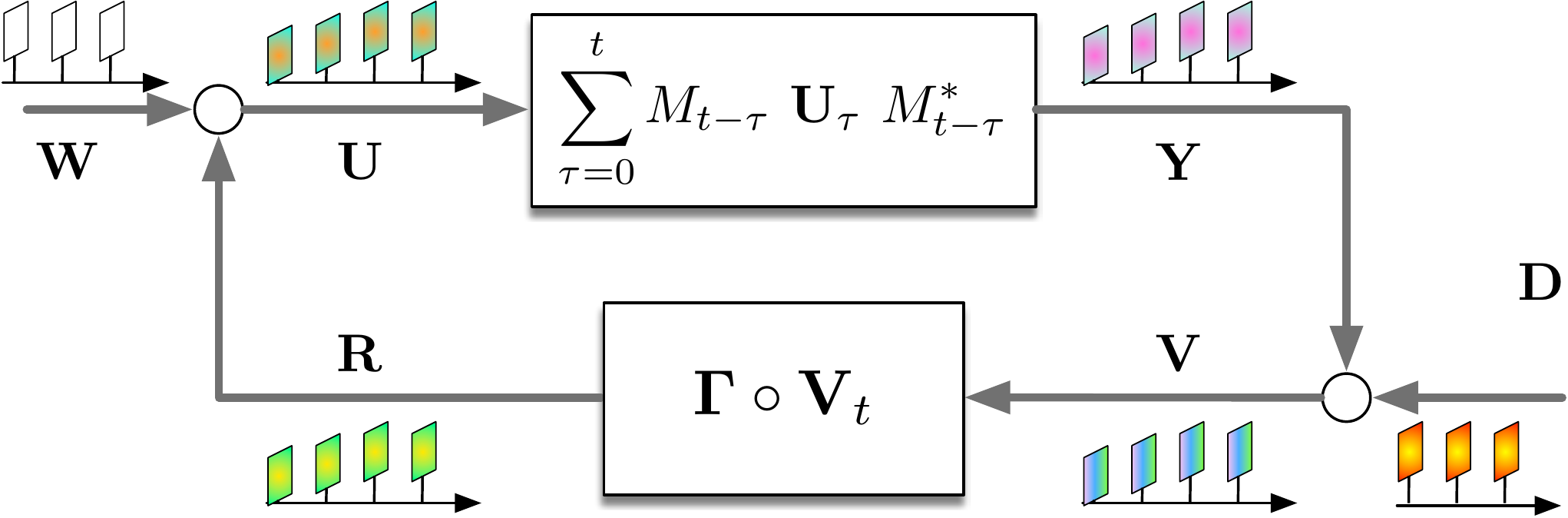}	\end{center}
		\caption{A deterministic feedback system detailing how the dynamics of the stochastic feedback 
		system in Figure~\ref{M_Delta.fig} operates on covariance sequences. Each signal in this diagram is 
		positive semi-definite matrix valued. 
		The forward block is an LTI matrix convolution, and the feedback block is the Hadamard matrix product. 
		It is notable that 
		all input-output mappings of this feedback system are monotone even if the 
		original system is not. This implies in particular that all covariance sequences in the loop are
		non-decreasing if the exogenous inputs are non-decreasing (in the semi-definite ordering on matrices).} 
	   \label{M_Delta_cov.fig}
	\end{figure}
	Each signal in this feedback system is {\em matrix-valued}. The mathematical operations on each 
	of the signals depicted in the individual blocks follow from~\req{outincorr} and ~\req{hadamard} and 
	the following observations
	\begin{enumerate} 
		\item $u$ is a white process. 
		\item For each $t$ 
			\be
				\Ub_t ~=~  \Rb_t ~+~ \Wb_t. 			\label{Ut.eq}
			\ee
		\item For each $t$ 
			\be
				\Vb_t ~=~ \Yb_t ~+~ \Db_t.			\label{Vt.eq}
			\ee
	\end{enumerate} 
	Observation 1 and 2 follow from 
	\begin{multline*}
		\expec{u_t u^*_\tau  }
		~ = ~   \expec{ (r_t+w_t) (r^*_\tau+w^*_\tau)  } 			\\
		~=~ \expec{ r_t r^*_\tau}   +\expec{r_t w^*_\tau} +  \expec{ w_t r^*_\tau}  + \expec{w_t w^*_\tau} \\ 
		~ = ~    
				\left\{ \begin{array}{ll} 
		 			 0 + 0 +  0 + 0  = 0 ,  &  \tau<t \\ 
					 \expec{ r_t r_t} +0+0+   \expec{w_t w^*_t}  =   	 \Rb_t +  \Wb_t,	
					  										& \tau=t 
					  \end{array}	\right.		
	\end{multline*}
	where $\expec{ w_t r^*_\tau}=0$ since $w$ is uncorrelated with past system
	signals, and $\expec{r_t w^*_\tau}=0$ follows from Lemma~\ref{pastpresent.lemma} because $\expec{r_t w^*_\tau} = \expec{\Gain_t v_t w^*_\tau} = \expec{\Gain_t} \expec{v_t w^*_\tau} = 0$. For the case
	$\tau<t$,     $\expec{w_t w^*_\tau}=0$ and $\expec{ r_t r^*_\tau}=0$ since  $w$ (by assumption)  and 
	$r$ (Lemma~\ref{delta.lemma})  are white respectively.  Observation 3 follows immediately from 
	\[
		\Vb_t ~=~ \expec{ (y_t+d_t)(y_t^*+d_t^*)} ~=~ \Yb_t + \Db_t, 
	\]	
	since $y_t$ is a function of $w$, $\Gain$ and past $d$'s, and thus is independent of $d_t$. 
	Note that $v$ may in general be colored. 
	
	Observation 1 implies that the forward block can indeed be written using~\req{outincorr} (the input needs to be white
	for its validity). Observations 2 and 3 imply that the summing junctions in Figure~\ref{M_Delta.fig} 
	can indeed be replaced by  summing junctions on the corresponding variance sequences in 
	Figure~\ref{M_Delta_cov.fig}.
	
\subsection{Monotonicity}						\label{monotone.subsec}

	Although it is not standard to consider systems operating on matrix-valued signals, it is rather advantageous
	in the current setting. 
	In this paper, the order relation used on matrices is always the positive semidefinite ordering (i.e. $A\geq B$ means 
	$(A-B)$ is positive semi-definite). The statements in this section apply to orderings with other positive cones, though 
	this generality is not needed here. 
	
	\subsubsection*{Monotone Operators}
	
	\begin{definition} 
		A matrix-valued linear operator $\cL:\R^{n\times n} \rightarrow \R^{m\times m}$ is called {\em monotone} 
		(in the terminology of~\cite{tigelaar1991monotone},  
		or {\em cone invariant} in the terminology of~\cite{berman1994nonnegative,parrilo2000cone}) if 
		\[
			X\geq 0 ~~\Rightarrow~~ \cL(X) \geq 0. 
		\] 
	\end{definition}
	In other words, if it preserves the semi-definite ordering on matrices (this definition is equivalent to the 
	statement $X\leq Y ~\Rightarrow  \cL(X) \leq \cL(Y)$). There is a Perron-Frobenius theory for such 
	 operators which gives them  nice  properties, 
	some of which are now summarized. 
	\begin{theorem}													\label{Ialpha.thm}
	For a matrix-valued monotone operator $\cL$ 
		\begin{enumerate} 
			\item $\exists$ a real, largest eigenvalue:  $\rho\lb\cL\rb$ is an eigenvalue of $\cL$. 
			\item $\exists$ an eigen-matrix $X\geq 0$ for the largest eigenvalue, i.e.
				\be	\cL(X) ~=~ \rho\lb\cL\rb ~X. 							\label{PF_matrix.eq}	\ee
			\item Sums and compositions of monotone operators are  monotone.  
				 For $\rho\lb\cL\rb<\alpha$,  the operator $\lb I - \cL /\alpha\rb^{-1}$ exists and is  
				monotone.											
		\end{enumerate}
	\end{theorem}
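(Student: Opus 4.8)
The plan is to recognize that the set of positive semidefinite matrices is a \emph{proper cone}---closed, convex, pointed, and solid (its interior being the positive definite matrices)---in the real vector space of symmetric (Hermitian) matrices, and that a monotone $\cL$ is exactly a linear operator leaving this cone invariant (viewed as a self-map, so that eigenvalues make sense). The three claims are then the cone version of the Perron--Frobenius theorem, and I would dispatch them in the order 3, 1, 2, since the resolvent constructed in part 3 is the engine for parts 1 and 2. Part 3 I would do directly: sums and compositions are immediate from the definition, since $X\geq0$ gives $\cL_1(X),\cL_2(X)\geq0$, hence $\cL_1(X)+\cL_2(X)\geq0$ and $\cL_1(\cL_2(X))\geq0$; in particular every power $\cL^k$ is monotone. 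For $\alpha>\rho\lb\cL\rb$ the spectral radius of $\cL/\alpha$ is less than $1$, so $I-\cL/\alpha$ is invertible with $\lb I-\cL/\alpha\rb^{-1}=\sum_{k=0}^\infty(\cL/\alpha)^k$. Each partial sum is a sum of monotone operators, hence monotone, and since the cone is closed the limit is monotone as well.

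For parts 1 and 2, I would fix a positive definite $W$ (an interior point) and the trace functional $\tr{\cdot}$, which is strictly positive on the cone in the sense that $\tr{Z}>0$ for every $Z\geq0$ with $Z\neq0$. The key object is the scalar power series $g(z):=\sum_{k=0}^\infty z^k\,\tr{\cL^k W}$. Its coefficients $a_k:=\tr{\cL^k W}$ are nonnegative because $\cL^k W\geq0$, and $a_0=\tr{W}>0$. I would show the radius of convergence of $g$ equals $1/\rho\lb\cL\rb$: it is at least $1/\rho\lb\cL\rb$ because $a_k\leq C\|\cL^k\|$, and at most $1/\rho\lb\cL\rb$ because $W$ interior gives $W\geq cI$, so monotonicity yields $a_k\geq c\,\tr{\cL^k I}$, while bounding the symmetric unit ball by the order interval $[0,I]$ gives $\tfrac12\|\cL^k\|\leq\tr{\cL^k I}$. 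Hence $\limsup a_k^{1/k}=\rho\lb\cL\rb$.

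The decisive step is Pringsheim's theorem: a power series with nonnegative coefficients has a singularity at the positive real point of its circle of convergence. Thus $z=1/\rho\lb\cL\rb$ is a singular point of $g$, hence of the operator resolvent $z\mapsto(I-z\cL)^{-1}$, which forces $\det(I-z\cL)=0$ there, i.e. $\rho\lb\cL\rb$ is a genuine real, nonnegative eigenvalue of $\cL$---this is part 1. For part 2 I would recover a cone eigen-matrix from the resolvent: writing $R_\alpha:=(\alpha I-\cL)^{-1}$ for real $\alpha>\rho\lb\cL\rb$, monotonicity of $R_\alpha$ gives $R_\alpha W\geq0$, while the singularity of $g$ at $1/\rho\lb\cL\rb$ forces $\tr{R_\alpha W}\to\infty$ as $\alpha\downarrow\rho\lb\cL\rb$. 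Normalizing $Y_\alpha:=R_\alpha W/\tr{R_\alpha W}$ keeps $Y_\alpha$ in the compact base $\{Y\geq0:\tr{Y}=1\}$; passing to a convergent subsequence $Y_\alpha\to Y_0$ and using $(\alpha I-\cL)R_\alpha W=W$ divided by $\tr{R_\alpha W}$ yields $\cL(Y_0)=\rho\lb\cL\rb\,Y_0$ with $Y_0\geq0$ and $\tr{Y_0}=1\neq0$.

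I expect the main obstacle to be precisely the identification in part 1: a priori $\rho\lb\cL\rb$ is only the maximum modulus over a possibly complex spectrum, and there may be complex eigenvalues of the same modulus, so it is not automatic that the real number $\rho\lb\cL\rb$ is itself an eigenvalue. Everything hinges on extracting, from cone invariance alone, a scalar series with nonnegative coefficients whose radius of convergence is \emph{exactly} $1/\rho\lb\cL\rb$, so that Pringsheim pins the singularity to the positive real axis. The careful points are the two-sided estimate $\tfrac12\|\cL^k\|\leq\tr{\cL^k I}\leq C\|\cL^k\|$ (where solidity of the cone and the strict positivity of the trace pairing are used) and verifying that the blow-up of $\tr{R_\alpha W}$ is genuine rather than cancelled by oscillation.
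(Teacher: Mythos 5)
Your proposal is correct, but it takes a genuinely different route for the substantive part of the theorem. The paper proves only statement (3) itself (exactly as you do: sums/compositions from the definition, then the Neumann series $\sum_k (\cL/\alpha)^k$, convergent because $\rho(\cL/\alpha)<1$ via Gelfand's formula, with monotonicity passing to the limit), and simply \emph{cites} the cone Perron--Frobenius theorem of Berman--Plemmons/Parrilo for statements (1) and (2). You instead reprove (1) and (2) from scratch: the two-sided estimate $\tfrac{c}{2}\|\cL^k\|\leq \tr{\cL^k W}\leq C\|\cL^k\|$ (using solidity of the PSD cone, the decomposition of the unit ball into the order interval $[-I,I]$, and $\lambda_{\max}\leq\tr{\cdot}$ on the cone) pins the radius of convergence of $g(z)=\sum_k z^k\tr{\cL^k W}$ at exactly $1/\rho(\cL)$, Pringsheim places the singularity on the positive real axis, rationality of the resolvent upgrades it to a pole, and the normalized resolvent iterates $R_\alpha W/\tr{R_\alpha W}$ in the compact base of the cone yield the eigen-matrix. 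This is essentially the classical finite-dimensional Krein--Rutman argument, and it buys a self-contained proof that makes visible exactly which cone properties (closedness, solidity, the strictly positive trace pairing) are used, at the cost of length; the paper's citation is the economical choice since the result is standard. One small edge case you should dispatch separately: if $\rho(\cL)=0$ the series $g$ is entire and Pringsheim gives nothing, but then $\cL$ is nilpotent and one takes $X=\cL^{m-1}Z\geq 0$ nonzero for a suitable $Z\geq 0$ (the cone spans the space), which is an eigen-matrix for the eigenvalue $0$. Your handling of the potential cancellation in $\tr{R_\alpha W}$ is sound: nonnegativity of the coefficients makes $g$ monotone on $[0,1/\rho)$, so the pole forces genuine blow-up.
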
 
	\begin{proof}
	The first two statements are from~\cite[Thm 2]{parrilo2000cone} or \cite[Thm 3.2]{berman1994nonnegative}. 
	That sums and compositions of monotone operators are monotone is immediate from the definition. 
	Furthermore, note that the Neuman series 
	\[
		\lb I -  \cL/\alpha \rb^{-1} ~=~ \sum_{k=0}^\infty \lb \cL/\alpha\rb^k 
	\]
	is made up of sums of compositions of a monotone operator $\cL/\alpha$. This series 
	converges in any operator norm since $\rho\lb\cL/\alpha\rb<1$ (this follows from Gelfand's formula which 
	implies that for any operator norm, there is some $k$ such that $\|\lb\cL/\alpha\rb^k \|<1$). 
	\end{proof} 
	
	Note that 
	the ``eigen-matrix'' $X$ in~\req{PF_matrix} is the counterpart of the Perron-Frobenius eigenvector for matrices 
	with non-negative entries. Such eigenmatrices will play an important role in the sequel as a sort of worst-case
	covariance matrices. 
	
	\subsubsection*{Monotone Systems and Signals}

	The positive semi-definite ordering on matrices induces a natural
	ordering on  {\em matrix-valued signals}, as well as a notion of monotonicity on systems~\cite{angeli2003monotone}. 
	For two matrix-valued signals $\Ub$ and $\Wb$ the following point-wise order relation can be defined
	\be
		(\Ub \leq \Wb)
		~~~~~\Longleftrightarrow~~~~~
		\forall t\in\Z^+, ~~ \Ub(t) \leq \Wb(t). 
	   \label{pointwise_ordering.eq}
	\ee
	For systems, the following is a restatement of the definition from~\cite{angeli2003monotone} when the initial conditions
	are zero. 
	\begin{definition} 
	An input-ouput system $\cM$ mapping on matrix-valued signals 
	is said to be {\em monotone} if whenever 
	\be
		\begin{array}{rcl} 	\Yb  &=&  \cM (\Ub) \\ \Zb  &=& \cM (\Wb) \end{array}, 
		~~~~~~~ \mbox{then}~~
		\Ub \leq \Wb ~\Rightarrow~ \Yb\leq \Zb. 
	  \label{mondef.eq}
	\ee
	\end{definition} 
	In other words, if $\cM$ preserves the positive semi-definite ordering on matrix-valued signals. 
	There is a further notion of monotonicity of an individual signal in the sense of mapping the time-axis ordering 
	to that of the matrix ordering. 
	\begin{definition} 
		A matrix-valued signal $\{\Ub(t)\}$ is said to be {\em monotone} (or {\em non-decreasing}) if 
		\[
			t_1\leq t_2 ~~\Rightarrow~~ \Ub(t_1) \leq \Ub(t_2) . 
		\]
	\end{definition}
	It is simple to show (Appendix~\ref{tempord.appen}) that a time-invariant monotone system maps non-decreasing 
	signals to non-decreasing signals. 
	
	\subsubsection*{Monotonicity of Covariance Feedback Systems}

	That the forward loop in Figure~\ref{M_Delta_cov.fig} is monotone is immediate since
	\begin{multline*}
		\forall t\in\Z^+, ~~\Ub_t \leq \Wb_t
		~~~~\Longrightarrow~~~~ 
					\forall t\in\Z^+, ~~\\
		\lb	\sum_{\tau=0}^t    M_{t-\tau}  ~\Ub_\tau~ M^*_{t-\tau} 	\rb
		\leq
		\lb	\sum_{\tau=0}^t    M_{t-\tau}  ~\Wb_\tau~ M^*_{t-\tau} 	\rb .
	\end{multline*} 
	Note that this is always
	the case {\em even when the original LTI system $\cM$ is not monotone!} It is also true that the hadamard product is
	monotone. This follows from the
	Schur Product Theorem~\cite[Thm 2.1]{horn1992block} which states that for any matrices 
	\be
		A_1\leq A_2 ~\mbox{and}~ B\geq 0  ~~\Rightarrow ~~ B \circ A_1 ~\leq~ B \circ A_2. 
	   \label{Schur.eq}
	\ee	
	Thus each of the two systems in the feedback loop 
	of Figure~\ref{M_Delta_cov.fig} is monotone. 
	
	When monotone systems are connected together with positive feedback, then all input-output mappings 
	in the resulting feedback system are also monotone 
	(see Theorem~\ref{monfdbk.thm} in Appendix~\ref{monotone.appen}). It then follows 
	(Appendix~\ref{tempord.appen}) that if the covariance signals $\Db$ and $\Wb$ are are non-decreasing,
	then all other covariance signals in feedback system are
	also non-decreasing. These non-decreasing covariance sequences are depicted in Figure~\ref{M_Delta_cov.fig}.


\section{Mean-Square Stability Conditions} 					\label{stab.sec}

	This section contains the main result of this paper characterizing MSS in terms of the spectral radius of a 
	matrix-valued operator.  MSS stability 
	arguments in the literature are typically done with state space models, and the present paper develops 
	an alternative input-output approach more akin to small-gain type analysis. This technique is most easily 
	demonstrated with the SISO case, which for clarity of exposition is treated separately first. 
	 The MIMO structured case is then developed with a similar small-gain analysis. 
	However, additional issues of spatial correlations appear in the structured case, and those are treated in 
	detail. The section ends by demonstrating how known conditions for  uncorrelated uncertainties  can 
	be derived as a special case of the general result presented here, as well as some comments about which 
	system metrics characterize MSS in the general correlated uncertainties case. 

\subsection{SISO Unstructured Uncertainty}

	Consider the simplest case of uncertainty analysis depicted in Figure~\ref{unstruct.fig} ({\em Left}). 
	\begin{figure}[h]
		\begin{center}	
		\includegraphics[height=0.075\textheight]{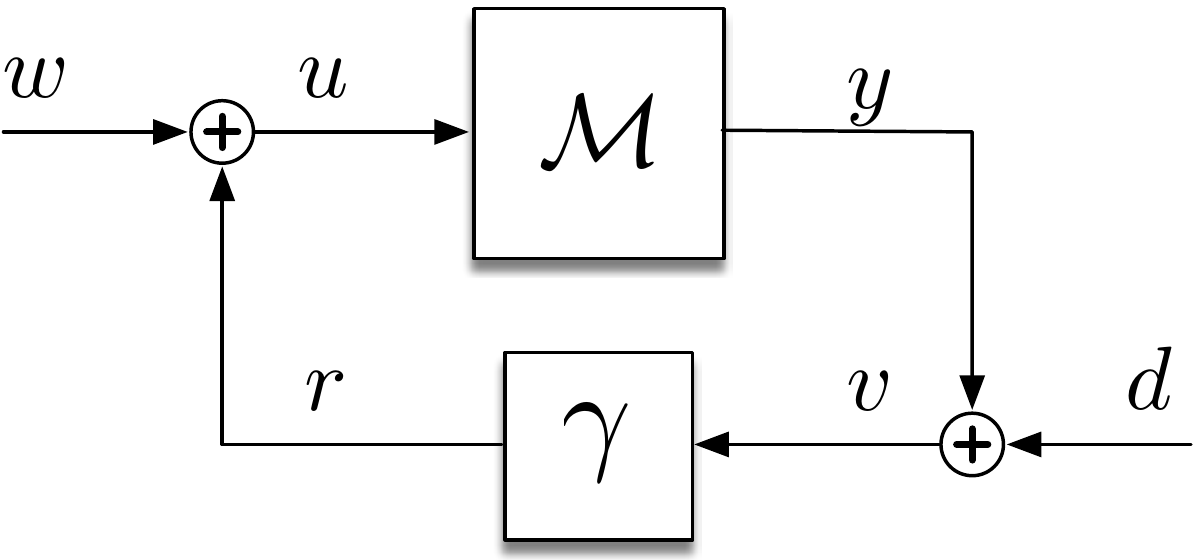}	~
		\includegraphics[height=0.075\textheight]{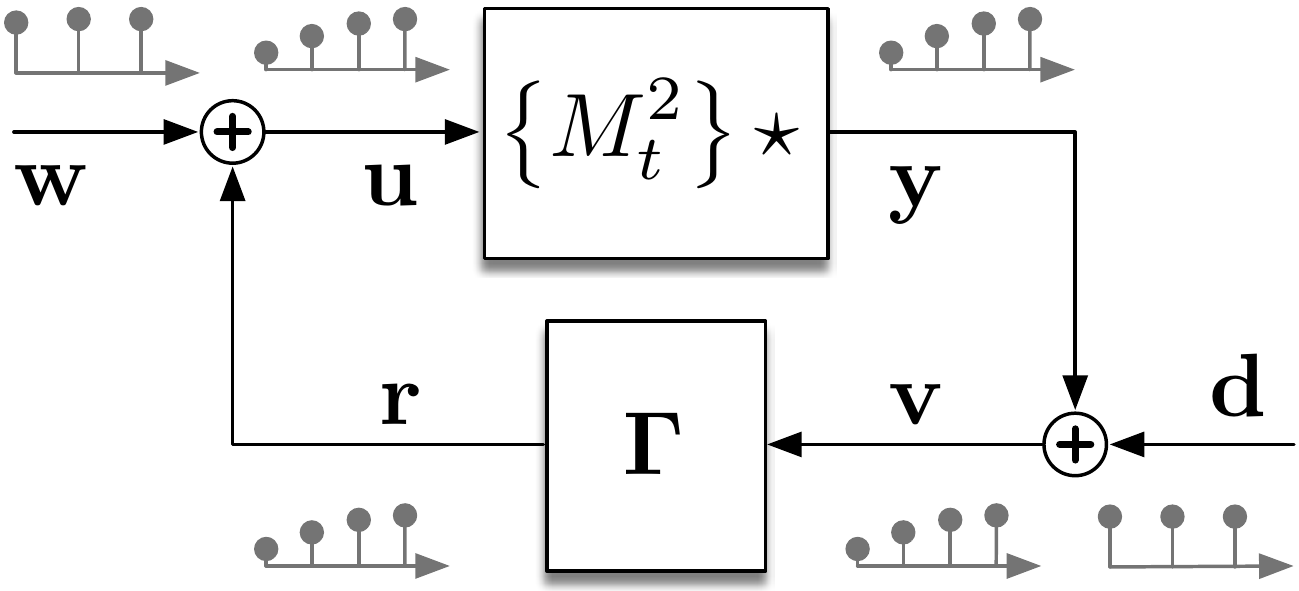}
		\end{center}
		\caption{({\em Left}) A LTI system $\cal M$ in feedback with a time-varying stochastic uncertainty
		$\left\{\gamma_t\right\}$, and with additive exogenous 
		stochastic disturbances $w$ and $d$. ({\em Right}) The equivalent LTI systems~\req{equiv_LTI} 
		operating on the variance sequences 
		of all respective signals. $\cM$ is replaced by convolution with the sequence $\left\{ M_t \right\}$, while 
		the stochastic gain $\gamma$ is replaced by multiplication with $\mathbf{\Gamma}$, its variance.  } 
	   \label{unstruct.fig}
	\end{figure}
	$\cM$ is a strictly causal LTI system, 
	$d$ and $w$ are exogenous white processes with uniform variance, 
	and $\gain$ is a white process with uniform variance $\Gainb$ and 
	independent of the signals $d$ and $w$. $\cM$ is assumed to have finite $H^2$ norm. 
		
	A small-gain stability analysis in the spirit of~\cite{desoer2009feedback} can be accomplished by 
	deriving the equivalent relations between the variance sequences of the various signals in the loop
	(Figure~\ref{unstruct.fig},  ({\em Right})). To begin with, recall the observations made in Section~\ref{cov.subsec}
	that $u$ is white, and 
	 for the SISO case the variance sequences satisfy
	\begin{eqnarray}
		\ub_t  & = &   \rbo_t  ~+~ \wb_t .	\label{rho_uk.eq}		\\
		\vb_t & = &  \yb_t ~+~ \db_t .  	 \label{sume.eq}
	\end{eqnarray}
%
	Since $u$ is white, the formulas for the variances sequences are particularly simple
	according to~\req{siso_variances} and~\req{siso_delta_variances}, 
	the equivalent relations are 
	\be
		\yb_t ~=~ \sum_{\tau=0}^t M^2_{t-\tau}~ \ub_\tau, ~~~~~~ 
		\rbo_t ~=~ \Gainb ~ \vb_t. 
	  \label{equiv_LTI.eq}	
	\ee
	The main stability result for unstructured stochastic 
	perturbations can now be stated. 
	\begin{lemma} 		\label{siso.lemma}
	Consider the system in Figure~\ref{unstruct.fig} with $\cM$ a strictly causal, stable LTI system
	and $\gain$ a temporally independent process with variance $\Gainb$. The feedback system is 
	Mean-Square Stable (MSS) if and only if 
	\[
		\|\cM\|_2^2  ~< ~ 1 / \Gainb . 
	\] 
	\end{lemma}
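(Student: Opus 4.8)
The plan is to reduce the stochastic loop to a deterministic scalar convolution feedback on the variance sequences, and then run a small-gain argument in both directions, using monotonicity to pin down the sharp threshold. First I would assemble the four scalar relations available in the SISO case: $\ub_t = \rbo_t + \wb_t$ and $\vb_t = \yb_t + \db_t$ from $\req{rho_uk}$ and $\req{sume}$, together with $\yb_t = \sum_{\tau=0}^t M^2_{t-\tau}\,\ub_\tau$ and $\rbo_t = \Gainb\,\vb_t$ from $\req{equiv_LTI}$. Eliminating $\rbo$, $\vb$, $\yb$ yields the single closed-loop identity
\[
  \ub_t ~=~ \Gainb \sum_{\tau=0}^t M^2_{t-\tau}\,\ub_\tau ~+~ f_t,
  \qquad f_t ~:=~ \Gainb\,\db_t + \wb_t,
\]
i.e. $(I - \Gainb\,\cS)\,\ub = f$, where $\cS$ denotes convolution by the nonnegative sequence $\{M_t^2\}$ acting on scalar variance signals. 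Since $w$ and $d$ have uniform variance, $f\in\ell^\infty$ with $\|f\|_\infty \le \Gainb\|\db\|_\infty + \|\wb\|_\infty$, so MSS is exactly the implication $f\in\ell^\infty \Rightarrow \ub\in\ell^\infty$ (the remaining loop signals are then bounded through the algebraic relations).

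For sufficiency I would observe that convolution by a nonnegative kernel has $\ell^\infty$-induced operator norm equal to the $\ell^1$ norm of that kernel, so $\|\Gainb\,\cS\| = \Gainb\sum_t M^2_t = \Gainb\,\|\cM\|_2^2$. When this quantity is $<1$, the Neumann series $\sum_{k=0}^\infty (\Gainb\,\cS)^k$ converges in the induced norm and furnishes a bounded inverse $(I-\Gainb\,\cS)^{-1}$ satisfying $\|\ub\|_\infty \le (1-\Gainb\|\cM\|_2^2)^{-1}\|f\|_\infty$. This is the scalar specialization of the monotone-inverse construction in Theorem~\ref{Ialpha.thm}(3); here it is simply ordinary small gain on $\ell^\infty$. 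Hence every bounded disturbance produces bounded variance sequences and the loop is MSS.

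For necessity I would argue the contrapositive using monotonicity. Assume $\Gainb\|\cM\|_2^2 \ge 1$ and inject the constant disturbance $\wb_t \equiv 1$, $\db \equiv 0$, so that $f_t \equiv 1$ is non-decreasing. Because the covariance feedback system of Section~\ref{monotone.subsec} is a positive feedback of the two monotone blocks (the forward convolution and the scalar gain $\Gainb$), the induced closed-loop map is monotone and time-invariant, hence sends the non-decreasing input $f$ to a non-decreasing sequence $\ub$. Were $\ub$ bounded, it would converge to a finite limit $\ubb$; passing to the limit in the closed-loop identity — justified because $\ub_{t-s}\to\ubb$ for each fixed $s$ while $\sum_s M^2_s < \infty$ dominates the sum — gives $\ubb = \Gainb\|\cM\|_2^2\,\ubb + 1$, i.e. $\ubb\,(1-\Gainb\|\cM\|_2^2) = 1$. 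Since $\ubb\ge 0$ and $1-\Gainb\|\cM\|_2^2 \le 0$, the left side is $\le 0$ while the right side is $1$, a contradiction. Thus $\ub$ is unbounded and MSS fails.

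The main obstacle I anticipate is the necessity direction, specifically the rigorous justification that $\ub$ is non-decreasing and that the limit may be interchanged with the infinite convolution sum. The monotonicity machinery of Section~\ref{monotone.subsec} (monotone systems map non-decreasing signals to non-decreasing signals) delivers the first point, and a dominated-convergence estimate built on the finiteness $\|\cM\|_2^2<\infty$ delivers the second; together they make this step clean. The sufficiency direction is a routine small-gain argument once the loop has been rewritten as the scalar convolution feedback above.
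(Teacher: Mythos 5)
Your proof is correct. The sufficiency half is essentially the paper's own argument: the chain of inequalities $\|\ub\|_\infty \leq \Gainb\|\cM\|_2^2\|\ub\|_\infty + \Gainb\|\db\|_\infty + \|\wb\|_\infty$ is exactly the statement that the loop map on variance sequences has $\ell^\infty$-induced gain $\Gainb\|\cM\|_2^2$, and your Neumann-series phrasing buys nothing new. The necessity half, however, is a genuinely different route. The paper fixes a horizon $T$ with $\alpha := \sum_{\tau=0}^T M_\tau^2$ close to (or exceeding) $\|\cM\|_2^2$, derives the recursive inequality $\ub_{Tk} \geq \alpha\,\ub_{T(k-1)} + \wb_{Tk}$ for the subsequence $\{\ub_{Tk}\}$, and concludes unboundedness from the explicit lower bound $\ub_{Tk} \geq \sum_{r=0}^k \alpha^r \wb_{T(k-r)}$ in~\req{alphabound}, treating $\|\cM\|_2^2 > 1$ (geometric growth) and $\|\cM\|_2^2 = 1$ separately. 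You instead argue by contradiction: monotonicity makes $\ub$ non-decreasing, a bounded non-decreasing sequence converges, and passing to the limit in the convolution identity (valid by dominated convergence since $\sum_s M_s^2 < \infty$ and $0 \leq \ub_{t-s} \leq \ubb$) forces $\ubb(1 - \Gainb\|\cM\|_2^2) = 1$, impossible when $\Gainb\|\cM\|_2^2 \geq 1$. Your argument is shorter and handles the threshold case $\Gainb\|\cM\|_2^2 = 1$ without the $\epsilon$-bookkeeping of the appendix, but it yields strictly less information: it does not exhibit the geometric growth rate $\alpha^k$, and it is the paper's explicit subsequence construction --- not the soft fixed-point argument --- that is lifted verbatim to the matrix-valued setting in the necessity proof of Theorem~\ref{main_MIMO_mss.thm}, where the analogue of your limit equation would require pairing with a Perron--Frobenius eigen-functional of the loop gain operator rather than a one-line sign check.
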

	\begin{proof} 		
	``\textit{if}'') This is similar to standard sufficiency small gain arguments, but using 
	variances rather than signal norms. Starting from $\ub$,  
	going backwards through the loop yields
	\begin{eqnarray} 
		\|\ub\|_\infty & \leq &  \| \rbo \|_\infty ~+~ \|\wb\|_\infty  		\nonumber		\\
				& \leq &  \Gainb~ \| \vb \|_\infty ~+~ \|\wb\|_\infty 		\nonumber  \\
				& \leq &  \Gainb~ \lb  \| \yb \|_\infty  +   \|\db\|_\infty \rb ~+~ \|\wb\|_\infty \nonumber	\\
				& \leq &  \Gainb~ \|\cM\|_2^2 ~ \| \ub \|_\infty  + \Gainb  \|\db\|_\infty  ~+~ \|\wb\|_\infty 	,
							~~~~~~
	\end{eqnarray}
	where subsequent steps follow from the triangle inequality and using~\req{varboundnorms} and
	\req{siso_delta_variances}. 
	This bound together with the assumption  $\Gainb ~ \|\cM\|_2^2  < 1$  
	gives a bound for the internal signals $u$  in terms of the exogenous signals
	$d$ and $w$ 
	\[
		\|\ub\|_\infty ~ \leq ~  \frac{1}{1- \Gainb \|\cM\|_2^2}  \left(
						\Gainb  \|\db\|_\infty +  \|\wb\|_\infty \right) ,
	\]
	In addition, this bound gives bounds on variances of the remaining internal signals $y$, $v$ and $r$ 
	as follows from~\req{varboundnorms}, \req{sume}, and \req{siso_delta_variances} respectively.

	\textit{``only if''})  See Appendix~\ref{nec1.appen}. 
	\end{proof}
	
	Two remarks are in order regarding the necessity part of the previous proof. 
	First is that there was no need to construct a so-called ``destabilizing'' perturbation as is typical in 
	 worst case perturbation analysis.  Perturbations here  are described statistically rather than 
	 members of sets, and variances 
	will always grow when the stability condition is violated. 
	Second, the necessity argument can be interpreted as showing that $\|\cM\|_2 \geq 1$ 
	implies that the transfer function $(1-\cM(z))$ has a zero in the interval $[0,\infty)$, and thus 
	$(1-\cM(z))^{-1}$ has an unstable pole. The argument presented above however is more easily 
	generalizable to the structured MIMO case  considered  next.

\subsection{Structured Uncertainty}

	In the analysis of  MIMO structured uncertainty, a certain matrix-valued operator will play 
	a central role, and therefore it is first introduced and some of its properties investigated. The main 
	result on MSS for the structured case is then stated and proved. 

	\subsubsection*{The Loop Gain Operator} 
		Consider the matrix-valued operator
		\be
			\cL (X) ~:=~  \Gainb \circ  \left( 	\sum_{t=0}^\infty M_{t} ~X~ M^*_{t} \right),
		  \label{LGO.eq}	
		\ee
		where $\Gainb$ is the correlation matrix of the uncertainties~\req{spatialcor},  and $\{M_t\}$ is the 
		matrix-valued impulse response sequence of a stable (finite $H^2$ norm), causal,  LTI system $\cM$. 
		This is termed the {\em loop gain operator} since it 
		is how the covariance matrix of stationary white noise input $u$ is mapped to the covariance of the signal 
		$r$ (which will also be white) in
		Figure~\ref{M_Delta_cov.fig}, i.e. it describes what happens to the instantaneous covariance matrix of white noise as 
		it goes once around the loop. 

		The loop gain operator is monotone since it is the composition of two monotone operators. There will also be 
		need to consider finite-horizon truncations of it 
            	\be
            		\cL_T \left(X \right) ~:=~ \Gainb \circ  \left( 
            				\sum_{t=0}^T M_t X ~M^*_t   \right) .
            	   \label{cldef.eq}
            	\ee
		If $\cM$ has finite $H^2$ norm, then it is immediate that 
		\[
			\lim_{T\rightarrow\infty} \cL_T ~=~ \cL
		\] 
		in any (finite-dimensional) operator norm. 

		It turns out that the spectral radius of the loop gain operator is the exact condition for  MSS. This is stated next.

	\begin{theorem} 						\label{main_MIMO_mss.thm}
		Consider the system in Figure~\ref{M_Delta.fig} where $w$ is a white process, both $d$ and $w$ 
		have bounded,  monotone
		covariance sequences, $\Gain$ is  temporally independent multiplicative noise with instantaneous correlations
		$\Gainb$, and $\cM$ is a 
		 stable (finite $H^2$ norm), strictly-causal,  LTI system. 
		The feedback system is Mean 
		Square Stable (MSS) if and only if 
		\[
			\rho \left( \cL \right) ~<~ 1, 
		\]
		where $\cL$ is the matrix-valued ``loop gain operator'' defined in ~\req{LGO}.
		
	\end{theorem}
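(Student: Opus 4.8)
The plan is to work entirely inside the deterministic covariance feedback system of Figure~\ref{M_Delta_cov.fig}, reducing the theorem to a single matrix recursion driven by the loop gain operator $\cL$ of \req{LGO}. Combining the summing junctions \req{Ut}, \req{Vt} with the forward convolution \req{outincorr} and the Hadamard feedback \req{hadamard}, and using strict causality of $\cM$ (so $M_0=0$), the covariance of $r$ obeys
\[
	\Rb_t ~=~ \Gainb \circ \Big( \sum_{\tau=0}^{t-1} M_{t-\tau}\, \Rb_\tau\, M^*_{t-\tau} \Big) ~+~ B_t ,
\]
where the exogenous forcing is $B_t := \Gainb \circ \big( \sum_{\tau=0}^{t} M_{t-\tau}\,\Wb_\tau\, M^*_{t-\tau} + \Db_t \big)$. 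Since $\cM$ has finite $H^2$ norm and $\Wb,\Db$ are bounded and non-decreasing, $B_t$ is non-decreasing and converges to $\bar B := \Gainb \circ \big( \covl{\cM}{\Wbb} + \bar{\Db}\big)$. By the monotonicity results of Section~\ref{monotone.subsec} all loop covariances are non-decreasing, so MSS is equivalent to boundedness (hence convergence) of $\{\Rb_t\}$.

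For the ``if'' direction I would assume $\rho(\cL)<1$ and exhibit a uniform bound. By Theorem~\ref{Ialpha.thm}(3) the operator $(I-\cL)^{-1}=\sum_{k\geq 0}\cL^k$ exists and is monotone, so $X^\star := (I-\cL)^{-1}\bar B = \sum_{k\geq0}\cL^k \bar B \geq 0$ is finite and satisfies $X^\star = \cL(X^\star)+\bar B$. I would then prove $\Rb_t \leq X^\star$ for all $t$ by induction: the base case is $\Rb_0 = \Gainb\circ\Db_0 \leq \bar B \leq X^\star$, and assuming $\Rb_\tau\leq X^\star$ for $\tau<t$, congruence invariance of the semidefinite order together with the Schur product property \req{Schur} give
\[
	\Rb_t ~\leq~ \Gainb \circ \Big( \sum_{s=0}^{\infty} M_{s}\, X^\star\, M^*_{s} \Big) + \bar B ~=~ \cL(X^\star)+\bar B ~=~ X^\star .
\]
Boundedness of $\Rb$ then propagates through \req{Ut}, \req{outincorr} and \req{Vt} to bound $\Ub,\Yb,\Vb$, establishing MSS.

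For the ``only if'' direction I would argue by contraposition, showing $\rho(\cL)\geq 1$ forces unbounded variance for an admissible choice of disturbances. I would drive the loop with $d$ (and $w$) having steady-state covariance $\bar{\Db}=I$, so that the forcing obeys $\bar B \geq \Gainb\circ I = \Diag{\Gainb}\succ 0$, which dominates a positive multiple $\epsilon X$ of the Perron eigen-matrix $X\geq 0$ guaranteed by Theorem~\ref{Ialpha.thm}(2), with $\cL(X)=\rho(\cL)X$. Assuming MSS for contradiction, $\{\Rb_t\}$ converges to a finite limit $\Rbb$ solving $\Rbb=\cL(\Rbb)+\bar B$; unrolling this fixed point and using monotonicity yields, for every $N$,
\[
	\Rbb ~\geq~ \sum_{k=0}^{N}\cL^k \bar B ~\geq~ \epsilon \sum_{k=0}^{N} \cL^k X ~=~ \epsilon \Big( \sum_{k=0}^{N}\rho(\cL)^k \Big) X .
\]
When $\rho(\cL)\geq1$ the scalar sum diverges while $X\neq0$, contradicting finiteness of $\Rbb$; hence MSS forces $\rho(\cL)<1$.

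I expect the delicate points to lie entirely in the necessity direction and in the limit-passing, since the monotone induction for sufficiency is routine once \req{Schur} and Theorem~\ref{Ialpha.thm} are in hand. The main obstacles are: (i) justifying that MSS indeed forces the non-decreasing covariance sequences to converge and to satisfy the steady-state fixed-point equation $\Rbb=\cL(\Rbb)+\bar B$, which relies on $\cM$ having finite $H^2$ norm so that the forward map commutes with the limit as in \req{covlim}; and (ii) selecting the disturbance spatial covariances so that $\bar B$ has a strictly positive component along the Perron eigen-matrix $X$, where the positive definiteness of $\Diag{\Gainb}$ is precisely what makes $\bar B \geq \epsilon X$ possible. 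A secondary subtlety is the boundary case $\rho(\cL)=1$, which the divergence of $\sum_k \rho(\cL)^k$ handles uniformly with the strictly unstable case.
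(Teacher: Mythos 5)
Your proof is correct, and while the sufficiency half is essentially the paper's argument in a light disguise, the necessity half takes a genuinely different route. For sufficiency, both arguments rest on $(I-\cL)^{-1}$ being monotone when $\rho(\cL)<1$ (Theorem~\ref{Ialpha.thm}(3)); the paper bounds $\Ub_t$ by exploiting that the covariance sequence is non-decreasing in $t$ (so $\Ub_{t-\tau}\leq \Ub_t$ inside the convolution), whereas you run an induction on $t$ against the explicit fixed point $X^\star=(I-\cL)^{-1}\bar B$ --- a variant that needs only $B_t\leq\bar B$ rather than the full monotone-feedback machinery. The real divergence is in necessity: the paper works with the finite-horizon truncations $\cL_T$ of \req{cldef}, injects the Perron eigen-matrix as the disturbance covariance ($\Wb_{Tk}=\Ubwh$), invokes Lemma~\ref{limit_ordering.lemma} to turn $\cL_T(\Ubwh)\approx\rho(\cL)\Ubwh$ into the one-sided bound $\cL_T(\Ubwh)\geq(\rho(\cL)-\epsilon c)\Ubwh$, and exhibits a geometrically growing subsequence, with a separate, more delicate treatment of $\rho(\cL)=1$. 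You instead argue by contradiction with the untruncated operator: boundedness plus monotonicity forces $\Rb_t\to\Rbb$ with $\Rbb=\cL(\Rbb)+\bar B$, and unrolling the fixed point against $\bar B\geq\epsilon\Ubwh$ yields $\Rbb\geq\epsilon\big(\sum_{k=0}^{N}\rho(\cL)^k\big)\Ubwh$ for every $N$, impossible when $\rho(\cL)\geq1$. This is cleaner --- it dispenses with $\cL_T$, with Lemmas~\ref{nullspace.lemma} and~\ref{limit_ordering.lemma}, and with the separate boundary case --- at the price of losing the explicit growth rate $\alpha^k$ that the paper reuses in Section~\ref{ss.sec}, and of leaning on the limit interchange $\Yb_t\to\covl{\cM}{\Ubb}$ of~\req{covlim}, which you rightly flag and which does hold for bounded non-decreasing inputs when $\cM$ has finite $H^2$ norm. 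Two details to tighten: $\Diag{\Gainb}\succ0$ presumes every $\expec{\gain_i^2}>0$; if some variance vanishes, the corresponding row and column of $\Gainb$ are zero, hence $\Ubwh$ is supported on the remaining coordinates and the bound $\Diag{\Gainb}\geq\epsilon\Ubwh$ survives after restricting to that support. And you should state explicitly that the contradiction shows $\tr{\Rb_t}$ is unbounded for this particular admissible choice of $d$ and $w$, which is precisely what negates the feedback definition of MSS.
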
 	
	\begin{proof}	``\textit{if}'') 
		Recalling the observations made in Section~\ref{cov.subsec}, 		
		an expression for 	
		 $\Ub_t$ can be derived by following signals backwards through the loop in Figure~\ref{M_Delta_cov.fig}
            	\begin{eqnarray} 
            		\Ub_t & = &  \Rb_t + \Wb_t ~=~  \Gainb \circ \Vb_t + \Wb_t
            														\nonumber		 \\
            			 & = &    \Gainb \circ  \left(  \Yb_t + \Db_t \right) + \Wb_t
            														\nonumber		 \\
            		\Ub_t & = & \Gainb \circ  \left( \sum_{\tau=0}^t M_\tau \Ub_{t-\tau} M^*_{\tau}  + \Db_t  \right)
									+ \Wb_t ,
            				\label{sigmau.eq}
            	\end{eqnarray} 
		which follow from~\req{Ut}, \req{hadamard}, \req{Vt}, the fact that $u$ is white 
		and~\req{outincorr} respectively. 
		The monotonicity (Section~\ref{monotone.subsec}) of the feedback system in Figure~\ref{M_Delta_cov.fig} relating 
		covariance sequences implies that 
            	$\Ub$ is a non-decreasing sequence. This gives the following bound 
            	\[
            		\sum_{\tau=0}^t M_{\tau} \Ub_{t-\tau} M^*_{\tau} 
            		~\leq~ 
            		\sum_{\tau=0}^t M_{\tau} \Ub_t M^*_\tau 	, 	
            	\]
		which together with Schur's theorem~\req{Schur} allows for
            	replaceing~\req{sigmau} with the bounds
            	\begin{eqnarray}
            		\Ub_t 
            		 & \leq & 
            		\Gainb \circ  \left( 	\sum_{\tau=0}^t M_{\tau} \Ub_t M^*_\tau   \right)
            		 + \Gainb \circ \Db_t 	+ \Wb_t									\nonumber	\\
            		 & \leq & 
            		\Gainb \circ  \left( 	\sum_{\tau=0}^\infty M_\tau \Ub_t M^*_\tau   \right)
            		+ \Gainb \circ \Db_t 	+ \Wb_t 			.						\nonumber
            	\end{eqnarray}
            	To see how this inequality gives a uniform bound on the sequence $\Ub$, rewrite it using the definition of $\cL$ as 
            	\[
            		\Big( I ~-~ \cL \Big)  (\Ub_t) ~\leq~   \Gainb \circ \Db_t + \Wb_t .
            	\]
		Now  $\rho\lb\cL\rb<1$ implies (by Theorem~\ref{Ialpha.thm} (3)) that $\lb I - \cL \rb^{-1}$ exists and is 
		a monotone operator, and therefore 
            	\begin{eqnarray*}
            		 \Ub_t   
            		& \leq &   \lb I - \cL \rb^{-1}   \Big(  \Gainb \circ \Db_t +\Wb_t  \Big)		\\
            		& \leq &   \lb I - \cL \rb^{-1}   \Big(  \Gainb \circ \bar{\Db} + \bar{\Wb}  \Big),
            	\end{eqnarray*}
		where the first inequality follows from the monotonicity of $\lb I - \cL \rb^{-1}$, and the second inequality follows 
		from its linearity, Schur's theorem, and replacing $\Db_t$ and $\Wb_t$ by their steady state limits. This provides
		a uniform upper bound on the sequence $\Ub$, and 
		note that
            	the stability of $\cM$ then implies in addition that all other signals in Figure~\ref{M_Delta_cov.fig} are uniformly 
		bounded.

		\textit{``only if''}) 
		In a similar manner to the necessity proof of Lemma~\ref{siso.lemma}, it is shown that $\rho(\cL)\geq 1$ implies that
		$\Ub$ has an unbounded subsequence. First, observe that by setting $\Db_t=0$, ~\req{sigmau} gives the following bounds  
            	\begin{eqnarray} 
            		\Ub_{T k} & = & \Gainb \circ  \left( 
            					\sum_{\tau=0}^{Tk} M_{Tk-\tau} \Ub_{\tau} M^*_{Tk-\tau}   \right)  + \Wb_{Tk} 	\nonumber	\\
            				& \geq & \Gainb \circ  \left( 
            					\sum_{\tau=T (k-1)}^{T k} M_{Tk-\tau} \Ub_\tau M^*_{Tk-\tau}   \right)  + \Wb_{Tk}  	\nonumber	\\
            			& \geq & \cL_{T} \left( \Ub_{T (k-1)} \right) + \Wb_{Tk} ~.							\label{LTlb.eq}
            	\end{eqnarray} 
		where the first inequality follows from Schur's theorem~\req{Schur}, and the second inequality follows from 
		the monotonicity of the sequence $\Ub$ and the monotonicity of the operator $\cL_T$.

		A simple induction argument exploiting the monotonicity of $\cL_T$ yields
		\begin{align} \label{recursion0.eqn}
		\Ub_{Tk} \geq \cL_T^k\left(\Ub_0\right) + \sum_{r=0}^{k-1} \cL_T^r \left(\Wb_{T(k-r)} \right).
		\end{align}	
		Now, set the exogenous covariance $\Wb_{Tk} = \Ubwh$, where $\Ubwh$  (the Perron-Frobenius eigen-matrix) is the non-zero semi-definite eigen-matrix such that  $\cL (\Ubwh) =  \rho(\cL) ~\Ubwh$ (Theorem~\ref{Ialpha.thm} (2)). Note that the initial covariance is thus $\Ub_0= \Db_0 = \Ubwh$. 
		Substituting in (\ref{recursion0.eqn}) yields
		\begin{align} \label{recursion0IC.eqn}
		\Ub_{Tk} \geq \sum_{r=0}^k \cL_T^r \left(\Ubwh \right).
		\end{align}
		Since $\lim_{T\rightarrow\infty} \cL_T(\Ubwh) =\cL(\Ubwh) = \rho(\cL) \Ubwh$,  then for any ${\epsilon>0}$, 
		$\exists ~T>0$ such that $||\rho(\cL) \Ubwh - \cL_T(\Ubwh)|| \leq \epsilon ||\Ubwh||$. This inequality coupled with the fact that $0 \leq \cL_T(\Ubwh) \leq \rho(\cL) \Ubwh$ allows us to apply Lemma~\ref{limit_ordering.lemma} to obtain
		\be
		\cL_T(\Ubwh)     ~\geq~     \lb \rho\lb\cL\rb - \epsilon c\rb   ~ \Ubwh ~=:~    \alpha~ \Ubwh, 
		\label{cLTbound.eq}
		\ee
		where $c$ is a positive constant that only depends on $\Ubwh$ (Lemma~\ref{limit_ordering.lemma}). 
		Then, by (\ref{recursion0IC.eqn}), the one-step lower bound~\req{cLTbound} becomes
		\be
		\Ub_{T k} ~\geq~ \lb  \sum_{r=0}^k \alpha^r \rb ~ \Ubwh  
		~=~ \frac{\alpha^{k+1}-1}{\alpha-1} ~ \Ubwh.
		\label{Usub_growth.eq}
		\ee
		First consider the case when $\rho\lb\cL\rb>1$, then $\epsilon$ can be chosen small enough so that $\alpha>1$ and therefore $\Ub$ is a geometrically growing 
		sequence.

		The case $\rho(\cL)=1$ can be treated in exactly the same manner as in the proof of Lemma~\ref{siso.lemma}
		to conclude that $\Ub$ has a (not necessarily geometrically) unboundedly growing subsequence. 
	\end{proof}

		{\em The worst case covariance matrix}: 
		An interesting contrast between the SISO and MIMO case appears in the necessity argument above. 
		Comparing the expressions for the unbounded sequences~\req{alphabound} and~\req{Usub_growth}, 
		it appears that the Perron-Frobenius eigen-matrix $\Ubwh$ of $\cL$ represents a sort of worst case 
		growth covariance matrix. In other words, to achieve the highest rate of covariance growth in the 
		feedback system, one needs the input $w$ to have spatial correlations such that $\Wb_t = \Ubwh$. 
		In an analysis where there are no exogenous inputs and one only considers growth of initial state 
		covariances, there is a similar worst-case initial state covariance that corresponds to $\Ubwh$. 
		Section~\ref{ss.sec} elaborates on this point.

\section{Special Structures and Representations} 						\label{special.sec}
	
	We consider first the special case of uncorrelated uncertainties, and show how the well-known result follows as a 
	special case. We then look at a Kronecker product representation of the general case which clarifies the 
	role played by system metrics other than $H^2$ norms in MSS conditions. These metrics involve what might be 
	termed as {\em autocorrelations between subsystems impulse responses}. 
	 Finally we consider the case of circulant 
	systems in which the presence of spatial symmetries provides conditions of intermediate difficulty between the 
	uncorrelated and the general case. 

\subsection{Uncorrelated Uncertainties}		 \label{diagappen}

	A well-known result in the literature~~\cite{lu2002mean,elia2005remote,hinrichsen1995stability,el1992stability}
	is the case of uncorrelated uncertainties $\{\gain_i\}$, where the MSS condition is known to be given 
	by the 
	spectral radius of the matrix of $H^2$ norms of subsystems of $\cM$. 
	We now demonstrate how this
	 result follows directly 
	as a special case of Theorem~\ref{main_MIMO_mss.thm}. 
	 
	For uncorrelated uncertainties, $\Gainb = I$, and the loop gain operator~\req{LGO} becomes 
	\be	
		\cL(X) ~:=~ \Diag{ \sum_{t=0}^\infty M_t ~X~ M^*_t  }. 
	   \label{cLdiag.eq}
	\ee
	where $\Diag{Y}$ is a diagonal matrix made up of the diagonal part of $Y$. In this case, 
	any eigen-matrix of $\cL$ (corresponding to a non-zero eigenvalue) 
	must clearly be a diagonal matrix, so it suffices to consider how $\cL$ acts on diagonal matrices. 
	Let $V := \Diag{ v }$  be a diagonal 
	matrix,  and recall the characterization~\req{hadop1}-\req{hadop2} of terms like  
	$\Diag{H V H^*} $ on diagonal matrices. Applying this term-by-term to the sum in~\req{cLdiag} gives 
	\begin{align}
		V=\Diag{v}, ~~\Rightarrow~~ 
		\cL(V) &=~ \Diag{ \sum_{t=0}^\infty M^{\circ 2}_t ~v  } 			\nonumber		\\
				&  =: ~ \Diag{ \Mb^{\circ} v } 											\label{McircDef.eq}
	\end{align}
	where $M^{\circ 2}$ is the Hadamard (element-by-element) square of the matrix $M$, and we use 
	the notation $\Mb^\circ$ to denote the matrix of squared $H^2$ norms of subsystems of $\cM$
%
	\be
		\Mb^\circ ~:=~
		 \sum_{t=0}^\infty M^{\circ 2}_t  ~=~ \begin{bmatrix} \|\cM_{11}\|_2^2 & \cdots & \|\cM_{1n}\|_2^2	\\
													\vdots & & \vdots		\\
												\|\cM_{n1}\|_2^2 & \cdots & \|\cM_{nn}\|_2^2	
									\end{bmatrix} ,
	   \label{math2norm.eq}
	\ee	
	We therefore conclude that the non-zero eigenvalues of $\cL$ are precisely the eigenvalues of $\Mb^{\circ}$,
	and in particular, their spectral radii are equal. This is summarized
	in the following corollary.

	\begin{corollary} 				\label{uncor.cor} 
	For the uncertain system of Figure~\ref{M_Delta.fig} with uncorrelated uncertainties, the mean square 
	stability condition of Theorem~\ref{main_MIMO_mss.thm} becomes 
	\[
		\rho \left( \Mb^\circ  \right)  ~\leq~  1/ \gainb, 
	\]
	where $\gainb:=\expec{\gain_i^2}$ is the uncertainties' variance (assumed equal for all $i$) and 
	$\Mb^\circ $ is the matrix~\req{math2norm} of  squared $H^2$ norms of 
	$\cM$'s subsystems. 
	\end{corollary}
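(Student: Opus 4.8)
The plan is to specialize the loop gain operator $\cL$ of \req{LGO} to the uncorrelated case and show that, on the relevant (diagonal) subspace, it is nothing but multiplication by the nonnegative matrix $\gainb\,\Mb^\circ$, so that the operator condition $\rho(\cL)<1$ of Theorem~\ref{main_MIMO_mss.thm} collapses to a condition on the ordinary spectral radius of $\Mb^\circ$. First I would record that ``uncorrelated with common variance $\gainb$'' means precisely $\Gainb = \gainb\, I$ in \req{spatialcor}. Substituting this into \req{LGO} and using $I\circ Y = \Diag{Y}$ gives
\[
\cL(X) ~=~ \gainb\,\Diag{\,\sum_{t=0}^\infty M_t\,X\,M_t^*\,},
\]
so that $\cL = \gainb\,\cL_0$, where $\cL_0$ is exactly the operator \req{cLdiag} treated in the computation preceding the corollary.

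Next I would justify restricting attention to diagonal arguments. Since the range of $\cL_0$ (hence of $\cL$) consists only of diagonal matrices, any eigen-matrix $X$ of $\cL$ with nonzero eigenvalue $\lambda$ satisfies $X = \lambda^{-1}\cL(X)$ and is therefore itself diagonal; thus the nonzero part of the spectrum of $\cL$ is determined entirely by its action on diagonal matrices. Writing $X = \Diag{v}$ and applying the identity \req{hadop2} term-by-term to the sum yields $\cL(\Diag{v}) = \gainb\,\Diag{\Mb^\circ v}$, with $\Mb^\circ = \sum_t M_t^{\circ 2}$ the matrix of squared $H^2$ norms \req{math2norm}, exactly as in \req{McircDef}. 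Under the isomorphism $v \leftrightarrow \Diag{v}$, the restriction of $\cL$ to diagonal matrices is therefore identified with the ordinary matrix--vector map $v \mapsto \gainb\,\Mb^\circ v$, so the nonzero eigenvalues of $\cL$ coincide with those of $\gainb\,\Mb^\circ$.

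It then remains to upgrade ``coincidence of nonzero spectra'' to ``equality of spectral radii,'' and here a little care is the only real obstacle. The matrix $\Mb^\circ$ has nonnegative entries, so by classical Perron--Frobenius its spectral radius is itself a nonnegative real eigenvalue; likewise Theorem~\ref{Ialpha.thm}(1) guarantees $\rho(\cL)$ is a genuine eigenvalue of $\cL$, attained on a positive semidefinite eigen-matrix. Matching these two largest eigenvalues through the identification above gives $\rho(\cL) = \gainb\,\rho(\Mb^\circ)$, with the degenerate cases immediate (a vanishing spectral radius on one side forces it on the other, since the correspondence is a linear bijection on the diagonal subspace). Finally, inserting this into Theorem~\ref{main_MIMO_mss.thm}, the MSS criterion $\rho(\cL)<1$ becomes $\gainb\,\rho(\Mb^\circ)<1$, i.e. $\rho(\Mb^\circ) < 1/\gainb$, which is the claimed condition.
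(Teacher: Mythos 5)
Your proposal is correct and follows essentially the same route as the paper: specialize $\Gainb$ to a multiple of the identity, observe that the range of $\cL$ consists of diagonal matrices so that nonzero-eigenvalue eigen-matrices are diagonal, and use the identity \req{hadop2} to identify the restriction of $\cL$ to diagonal matrices with multiplication by $\gainb\,\Mb^\circ$, whence $\rho(\cL)=\gainb\,\rho(\Mb^\circ)$. Your extra care about the degenerate (all-zero spectrum) case and the explicit Perron--Frobenius matching of the two largest eigenvalues only tightens the same argument the paper gives.
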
 
	
\subsection{Repeated Perturbations}		 \label{repeated.subsec}
	
This case represents the opposite extreme to the uncorrelated perturbations case. Here we have all the perturbations
identical, i.e. 
\[
		\Gain(t) ~:=~ \Diag{  \rule{0em}{1em} \gain(t), ~\ldots,~ \gain(t) } ~=~ I~\gain(t), 
\]
where $\{\gain(t)\}$ is a scalar-valued iid random process and $I$ is the $n\times n$ identify matrix. In this case, 
all entries of the uncertainty correlation matrix are equal, i.e. 
$\expec{\gain_i(t) \gain_j(t)} = \gainb $, and therefore 
\[
	\Gainb ~=~ \gainb~ \bfo \bfo^* ,
\]
where $\bfo$ is the vector of all entries equal to $1$.  Now the Loop Gain Operator~\req{LGO} takes on a particularly 
simple form 
\begin{align*} 
				\cL (X) &= \big( \gainb~ \bfo \bfo^*  \big) \circ  \left( 	\sum_{t=0}^\infty M_{t} X M^*_{t} \right)
				= \gainb \sum_{t=0}^\infty M_{t} X M^*_{t}  						\\
				&= \gainb ~ \covl{\cM}{X}
\end{align*} 

The interpretation of $\cL$ in this case is simple. Referring to Equation~\req{covlim} we see that 
$\cL(X)$ is the steady-state covariance matrix of the output of the LTI system $\cM$ when its 
input is white noise 
 with covariance matrix $\gainb X$.  
 In particular, let the system $\cM$ have a state space realization $(A,B,C)$, then 
 for an eigen-matrix $X$ of $\cL$ with eigenvalue $\lambda$
 \be
  	\gainb ~ \covl{\cM}{X} = \cL(X)=\lambda X
  	~~ \Longleftrightarrow~~  \covl{\cM}{X} = \frac{\lambda}{\gainb} X, 
    \label{reppert.eq}
  \ee
  which implies that $X$ 
  satisfies the matrix equation 
 \begin{align*}
 	 Y ~-~ A YA^* ~=&  ~B XB^* , 		\\
	 	\frac{\lambda}{\gainb} X ~=& ~C Y C^*. 
 \end{align*} 
Equivalently, a single equation for $Y$ can be written
\be
	Y ~-~ A~Y~A^* ~=~  \frac{\gainb}{\lambda} ~ BC ~Y~ C^*B^* . 
  \label{lyaprep.eq}	
\ee
This is not a standard Lyapunov equation, but it can always be thought of as a generalized eigenvalue problem as follows. Define the linear operators $\cL_1$ and $\cL_2$ by 
\begin{align*} 
	\cL_1(Y) &:= ~ \gainb ~BC ~Y ~C^*B^* , \\
	\cL_2(Y) &:=~ Y ~-~A~Y~A' . 
\end{align*} 
Then equation~\req{lyaprep} can be rewritten as the generalized eigenvalue problem 
\be
	\cL_1(Y) ~=~ \lambda~ \cL_2(Y).
\ee 

Finally we note an interesting interpretation of covariance matrices that arise as eignmatrices of $\cL$ in the repeated 
perturbation case. As~\req{reppert} shows, these eigenmatrices are exactly those covariances of input processes to an LTI system $\cM$ with the property that the steady-state output covariance is a scalar multiple of the input covariance. These are very special processes, but there dynamical significance is not yet clear.

\subsection{Kronecker Product Representation of the General Case}		 \label{kronecker.subsec}
	
	For the general case of correlated uncertainties $\Gainb \neq I$, and it turns out that 
	 entries of the matrix~\req{math2norm} 
	alone are insufficient to characterize  $\rho\lb\cL\rb$ (and thus the MSS) condition. In the absence 
	of any spatial structure in  $\cM$ and $\Gainb$, one can always use a
	 Kronecker product representation of $\cL$. This  representation 
	gives some insight into the general case.

	
	Let $\vect{X}$ denote the ``vectorization'' operation of converting a matrix $X$ 
	into a vector by stacking up its columns. It is then standard to show that the 
	loop gain operator~\req{LGO} can  equivalently be written as 
	\begin{multline}
		\vect{\cL(X)} ~=~		\\
			\underbrace{
				\left(  \Diag{\rule{0em}{1em}\vect{\Gainb}} 
				~ \sum_{t=0}^\infty M(t) \otimes M(t)  \right) 	}_{\text{matrix representation of $\cL$}}  \vect{X} .
	   \label{kronrep.eq}
	\end{multline}
	Therefore, the eigenvalues (and corresponding eigenmatrices) of $\cL$ can be found by calculating 
	the eigenvalues/vectors of this $n^2\times n^2$ representation using standard matrix methods. 
	This is clearly not a desirable method even for only moderately large-scale systems. An alternative computational 
	method for large systems based on the power iteration is presented in Section~\ref{pi.sec}.

	The formula~\req{kronrep} however provides some insight into the general case when $\Gainb$ is not diagonal. In that 
	case, entries of the matrix representation will involve sums of the form 
	\be
		\sum_{t=0}^\infty M_{ij}(t) ~M_{kl}(t) . 
	   \label{inprod.eq}
	\ee
	These are inner products of impulse responses of different SISO subsystems of $\cM$. They can be thought 
	of as {\em autocorrelations} of the MIMO system $\cM$'s responses. 
	In the special case of uncorrelated uncertainties $\Gainb=I$, only terms for identical subsystems ( $(i,j)=(k,l)$ ) 
	appear, resulting in $H^2$ norms of subsystems. Thus it is seen that a condition involving only $H^2$ norms 
	like~\req{math2norm} is a highly special case. {\em  To characterize MSS in  correlated uncertainty cases, 
	one needs in addition other 
	system metrics, like the inner product between different subsystems' impulse responses~\req{inprod}.}

\section{Mean-square Performance} 					\label{perf.sec}

	The mean-square performance problem is a general formulation for linear systems with both additive and 
	multiplicative noise. It is straightforward to show that any Linear Time Invariant (LTI) system with both additive and 
	multiplicative noise can be redrawn in the form shown in Figure~\ref{perf.fig} ({\em Left}), where $\cM$ is LTI, 
	$w$ is the additive white noise, and 
	the multiplicative perturbations are grouped together in the diagonal matrix gain 
	$\Gain := {\rm diag}(\gain_1,\ldots,\gain_n)$. 		
	\begin{figure}[h]
		\begin{center}	
			\includegraphics[width=0.2\textwidth]{figures/G_delta_Perf.pdf}	~~~~~
			\includegraphics[width=0.2\textwidth]{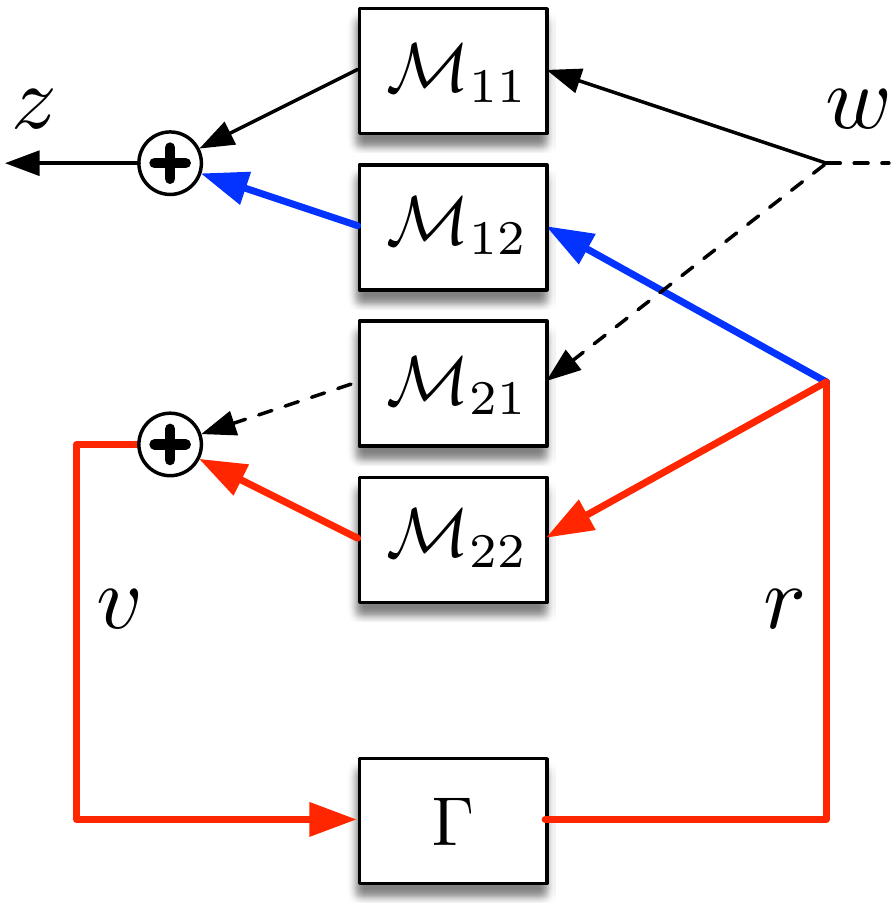}	
		\end{center}
		\caption{({\em Left}) The mean-square performance problem setting with additive noise $w$ as an 
		exogenous signal and multiplicative noise $\gain_i$'s as structured stochastic uncertainty. ({\em Right})
		Details of the various signal paths. The variance of $z$ is finite iff the feedback loop $(M_{22},\Gain)$ 
		is mean-square stable.}
	   \label{perf.fig}
	\end{figure}
	The assumption of whiteness of $w$ is made without any loss in generality. If the additive noise is colored, then 
	the coloring filter can be absorbed in the LTI dynamics of $\cM$ in the standard manner. 
		
	The mean-square performance problem is to find conditions for MSS stability of the feedback system, and to 
	 calculate the steady state covariance of the output $z$. It is clear 
	from Figure~\ref{perf.fig} ({\em Right}) that $z$ has finite covariance iff  the feedback subsystem $(\cM_{22},\Gain)$ 
	is MSS. The exact condition for this is that the spectral radius of the loop gain operator~\req{LGO} for $\cM_{22}$ 
	and $\Gain$
	\be
		\cL_{22} (X) ~:=~  \Gainb \circ  \left( 	\sum_{t=0}^\infty M_{22}(t) ~X~ M^*_{22}(t) \right) = \Gainb \circ \covl{\cM_{22}}{X}.
	   \label{cL22.eq}
	\ee
	has spectral radius less than 1. 
	
	The calculation of the covariance $\Zb$ proceeds
	similarly to Equation \req{sigmau} where the first steps are to relate the covariances of the signals in the lower 
	feedback loop. It is first noted that with assumption of MSS, all covariance sequences are bounded and have steady 
	state limits, so the following relations are written directly for those limits
	\begin{eqnarray*}
		\Rb
		& = & 	\Gainb \circ  \Vb			\\			
		& = &  \Gainb \circ  \lb   \sum_{t=0}^\infty
			{\begin{bmatrix} M_{21} & M_{22} \end{bmatrix}}_{t}  
			\begin{bmatrix} \Wb & 0 \\ 0 &   \Rb \end{bmatrix} 
			{\begin{bmatrix} M^*_{21} \\ M^*_{22} \end{bmatrix}}_{t}		\rb			\\
		& = & \Gainb \circ \lb      \sum_{t=0}^\infty		
			  M_{22}(t)    \Rb M^*_{22}(t) + M_{21}(t)  \Wb  M^*_{21}(t)  	\rb, 
	\end{eqnarray*}
	where for  simplicity, we have dropped the ``overbar'' notation~\req{bar_nota} 
	for the covariance limit (e.g. in this section, $\Rb$ stands
	for $\lim_{t\rightarrow\infty} \Rb_t$). 
	The expression for $\Vb$ 
	follows from \req{covlim} and the fact that both $w$ and $r$ are mutually uncorrelated and white. 
	The last equation can be rewritten in operator form using the definition~\req{cL22} and the notation of~\req{covlim}
	\begin{eqnarray*}
		\lb I ~-~ \cL_{22} \rb  \lb \Rb \rb & = &  \Gainb \circ  \covl{\cM_{21}}{\Wb}  			\\
		 \Rb  & = &  \lb I ~-~ \cL_{22} \rb^{-1}  \lb    \Gainb \circ  \covl{\cM_{21}}{ \Wb}    \rule{0em}{1em}	\rb
	\end{eqnarray*}
	
	Finally, to calculate the covariance of the output, note that 
	\begin{eqnarray}
		\Zb
		& = & \sum_{t=0}^\infty
			{\begin{bmatrix} M_{11} & M_{12} \end{bmatrix}}_{t}  
			\begin{bmatrix} \Wb & 0 \\ 0 & \Rb \end{bmatrix} 
			{\begin{bmatrix} M^*_{11} \\ M^*_{12} \end{bmatrix}}_{t} 		\nonumber \\	
		& = & \covl{\cM_{11}}{ \Wb } + 	\covl{\cM_{12}}{ \Rb }
																\nonumber	\\
		& = & \covl{\cM_{11}}{\Wb} + 	
			\covl{\cM_{12}}{
			 	\lb I - \cL_{22} \rb^{-1} \lb \Gainb \circ \covl{ \cM_{21}}{\Wb } \rule{0em}{1em} \rb 	}
			 											\nonumber
	\end{eqnarray}
	Note how this formula has a familiar feel to the Linear Fractional Transformations (LFT) from standard transfer 
	function block diagram manipulations. The difference being that these are operators on matrices rather than 
	vector signals. It is instructive to compare the above formula with that of
	the transfer functions operating on the 
	vector signals $w$ and $z$
	\[
		z ~=~ \lb \cM_{11} ~+~\cM_{12} \lb I - \Gain \cM_{22} \rb^{-1} \Gain \cM_{21}  \rb w  .
	\]
	This resemblance is more immediate in the SISO case (single SISO $\Gain$	and $w$ a scalar), where 
	the expression simplifies to 
	\[
		{\zb} ~=~ 
		\lb \|\cM_{11}\|_2^2 ~+~ \frac{\Gainb ~\|\cM_{12}\|_2^2 \|\cM_{12}]\|_2^2 }{1 ~-~ \Gainb \|\cM_{22}\|_2^2} \rb
		{\wb}
	\]
	
	The expression for $\Zb$ above is written to highlight the analogy with Linear Fractional Transformations LFT of 
	transfer functions. For computations, it is more convenient to write it in the following form of a system of two
	equations 
	\begin{align}
		\Rb - \Gainb \circ \covl{\cM_{22}}{\Rb} &=~ \Gainb \circ \covl{\cM_{21}}{\Wb} 		\label{Req.eq}\\ 
		\Zb &=~  \covl{\cM_{11}}{\Wb} + \covl{\cM_{12}}{\Rb} ,							\label{Zeq.eq}
	\end{align} 
	which indicates that in solving for $\Zb$, one has to go through the intermediate step of solving for $\Rb$ from~\req{Req}
	
\subsection{Uncorrelated  Uncertainties Case} 
	
	In this case we assume the uncertainties to have correlations 
	$\Gain=\gainb I$. The case when different uncertainties have different variances can be accommodated through 
	adding the appropriate multipliers in the system $\cM$. In this case it follows that the matrix $\Rb$ in~\req{Req} is 
	diagonal. If we assume in addition that $\Wb$ is diagonal, and we are only interested in the diagonal part of $\Zb$, 
	then equations~\req{Req}-\req{Zeq} can be rewritten in terms of matrix-vector multiplication using the 
	notation of~\req{McircDef}-\req{math2norm} where the vectors are the 
	diagonal entries of the respective covariance matrices
%
%
	\begin{multline*}
		\diag{\Rb} - \gainb~ \Mb_{22}^{\circ}~ \diag{\Rb} ~=~\gainb ~\Mb_{21}^{\circ} ~\diag{\Wb} 	\\ 
		\diag{\Zb} ~=~  \Mb_{11}^{\circ} ~\diag{\Wb} + \Mb_{12}^{\circ} ~\diag{\Rb} . 							
	\end{multline*} 
	Putting these two equations together by eliminating $\Rb$ gives 
	\[
		\diag{\Zb}  = 
			 \left( \Mb_{11}^{\circ} +	\gainb ~
			 \Mb_{12}^{\circ} 	\lb I - \gainb~ \Mb_{22}^{\circ } \rb^{-1}  \Mb_{21}^{\circ }  \right)   \diag{\Wb} 	.  	 
	\]
%
	Without loss of generality we can in addition assume 
	$\Wb=I$, for which  there is an even simpler expression for the total variance of $z$ 
	\be
			\tr{\Zb}  = 
			 \left| \Mb_{11}^{\circ } +	\gainb ~
			 \Mb_{12}^{\circ } 	\big( I - \gainb~ \Mb_{22}^{\circ } \big)^{-1}  \Mb_{21}^{\circ }  \right|	  	, 
	  \label{trZ.eq}
	\ee
	where $| M |$ stands for the sum of all the elements of a non-negative matrix $M$. 
	
	In the literature on robust stability analysis, there is often an equivalence between a robust performance condition 
	and a robust stability condition on an augmented system with an additional uncertainty. 
	The uncorrelated case here provides a version of such 
	a correspondence, and we will state it without loss of generality for the case of $\gainb=1$. 
	
	\begin{corollary} 
	Consider the  system of Figure~\ref{perf.fig} with uncorrelated uncertainties with variances $\expec{\gain_i^2(t)}=1$, and 
	scalar inputs and outputs $w$ and $z$ respectively. Then
	\[
		\frac{ \expec{z^2(t)} }{ \expec{ w^2(t)}} ~<~1 
		~~~~\Longleftrightarrow~~ ~~ 
		\rho \left(   \bbm \Mb_{11}^\circ  &  \Mb_{12}^\circ  \\  \Mb_{21}^\circ &   \Mb_{22}^\circ  \ebm   \right)  ~<~1.
	\]
	\end{corollary}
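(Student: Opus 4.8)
The plan is to read the performance ratio directly off the formula~\req{trZ} and recognize it as a Schur complement of the claimed block matrix, after which the whole statement reduces to a standard fact about nonnegative (M-)matrices. Writing $N := \bbm \Mb_{11}^\circ & \Mb_{12}^\circ \\ \Mb_{21}^\circ & \Mb_{22}^\circ \ebm$ and specializing~\req{trZ} to $\gainb = 1$ and $\Wb = I$ (so that $\expec{w^2(t)}=1$), the scalar input/output makes $\Mb_{11}^\circ = \|\cM_{11}\|_2^2$ a scalar and $\Mb_{12}^\circ$, $\Mb_{21}^\circ$ a row and a column, so that
\[
	\frac{\expec{z^2(t)}}{\expec{w^2(t)}} ~=~ \Mb_{11}^\circ + \Mb_{12}^\circ \big(I - \Mb_{22}^\circ\big)^{-1}\Mb_{21}^\circ ~=:~ g .
\]
The first step is a one-line block elimination verifying that this is exactly $1$ minus the Schur complement of the trailing block in $I - N$, i.e. $(I-N)/(I-\Mb_{22}^\circ) = 1 - g$.

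The remaining work is purely a statement about the nonnegative matrix $N$: every entry of $N$ is a squared $H^2$ norm, so $N \geq 0$ entrywise and $I - N$ is a Z-matrix. I would invoke the standard characterizations of M-matrices (e.g.~\cite{berman1994nonnegative}): for nonnegative $N$, $\rho(N) < 1$ is equivalent to $I - N$ being a nonsingular M-matrix; and, for a Z-matrix whose trailing principal block $I - \Mb_{22}^\circ$ is itself a nonsingular M-matrix, $I - N$ is a nonsingular M-matrix if and only if its scalar Schur complement $1 - g$ is positive. These two facts together give
\[
	\rho(N) < 1 ~~\Longleftrightarrow~~ \big(\rho(\Mb_{22}^\circ) < 1 ~\text{and}~ g < 1\big).
\]
Note that $\rho(\Mb_{22}^\circ) < 1$ is precisely the condition making $(I - \Mb_{22}^\circ)^{-1} = \sum_k (\Mb_{22}^\circ)^k \geq 0$, so the cross term in $g$ is a genuine nonnegative scalar.

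It remains to remove the auxiliary hypothesis $\rho(\Mb_{22}^\circ) < 1$. On the left of the claimed equivalence, finiteness of $\expec{z^2(t)}$ already forces the inner loop $(\cM_{22},\Gain)$ to be MSS, i.e. $\rho(\Mb_{22}^\circ) < 1$ (Figure~\ref{perf.fig}); on the right, $\rho(N) < 1$ forces $\rho(\Mb_{22}^\circ) \leq \rho(N) < 1$, since the spectral radius of a principal submatrix of a nonnegative matrix never exceeds that of the whole. Thus if $\rho(\Mb_{22}^\circ) \geq 1$ both sides are false and the equivalence is trivial, while under $\rho(\Mb_{22}^\circ) < 1$ the M-matrix result collapses to exactly $g < 1 \Leftrightarrow \rho(N) < 1$. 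I would close by recording the conceptual reading promised in the text: $N$ is precisely the matrix $\Mb^\circ$ of Corollary~\ref{uncor.cor} applied to the full $\cM$ augmented by a fictitious unit-variance uncertainty that closes the $z \to w$ channel, so $\rho(N) < 1$ is the MSS condition of that augmented $(n+1)$-uncertainty loop.

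The main obstacle is not the algebra but making the M-matrix Schur-complement step airtight in both directions and across the boundary $\rho(\Mb_{22}^\circ) = 1$; in particular one must ensure the converse implication of the M-matrix theorem is applied with the correct hypotheses (trailing block a nonsingular M-matrix, off-diagonal blocks sign-definite). If instead one pursues the augmented-loop route via Corollary~\ref{uncor.cor}, the delicate point shifts to justifying a separation of the two nested loops, namely that MSS of the full augmented interconnection is equivalent to MSS of the inner $\Gain$-loop together with MSS of the outer fictitious loop closed around the resulting scalar variance gain $g$, which is a SISO application of Lemma~\ref{siso.lemma}.
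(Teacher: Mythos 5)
Your proof is correct and takes essentially the same route as the paper, whose entire argument is the one-line remark that the result is ``a simple application of the Schur complement on the $2\times 2$ block matrix'' equating the spectral radius condition with the right-hand side of~\req{trZ} being less than one. Your M-matrix formulation of that Schur-complement step and your explicit treatment of the boundary case $\rho\left(\Mb_{22}^\circ\right)\geq 1$ simply supply the details the paper leaves implicit.
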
 
	The proof is a simple application of the Schur complement on the $2\times 2$ block matrix, which implies that 
	the spectral radius condition is equivalent to the right hand side of~\req{trZ} being less than 1. 
	Note that the variance ratio condition is a performance condition, while the spectral radius condition is a MSS stability 
	condition for a system with an additional (fictitious) uncertainty in a feedback loop between $z$ and $w$. 

%
%
%

\section{State Space Methods and Computations} 					\label{ss.sec}

		While the input-output setting presented in this paper appears to be more expedient for analysis 
		and statement of results, it is often (though not always) the case that actual computations 
		are more conveniently carried out using state space representations. From one point of view, 
		this results in Linear Matrix Inequality (LMI) conditions. For large-scale system applications, we 
		write out a power iteration type algorithm that involves solving Lyapunov equations at each step of the 
		iteration. Finally we give a state space interpretation of the ``worst case covariance'' in the case where there 
		are no exogenous inputs. This turns out to be a worst case covariance of a random initial state.

	\subsection{Mean Square Stability Conditions} 				\label{pi.sec}
		
		Begin with the MSS problem of Theorem~\ref{main_MIMO_mss.thm}. 	
		Let the strictly-causal LTI system $\cM$ have the following realization
		\begin{eqnarray*}
		x_{t+1} & = &  A x_t + B u_t 	,		\\ 
		y_t  & = & C x_t , 
		\end{eqnarray*} 
		from which it follows that a corresponding realization for the covariance feedback system is
		\begin{eqnarray*}
            		\Xb_{t+1} & = & A \Xb_t A^* + B\Ub_t B^*			\\ 
            		\Yb_t  & = & C \Xb_t  C^* 						\\
            		\Ub_t & = & \Gainb \circ \Yb_t 	. 
		\end{eqnarray*}
		Taking the steady state limit produces the following representation of the loop gain 
		operator $\Rbb = \cL\lb\Ubb\rb$
		\begin{eqnarray}
            		\Xbb - A \Xbb A^*  & = &  B \Ubb B^*,			\label{lyapXbb.eq}		\\ 
            		 \Rbb & = & \Gainb \circ \lb C \Xbb  C^*  \rb 	.	\label{Rbb.eq}
		\end{eqnarray}
		Thus one method of computing the action of $\cL$ is to solve the Lyapunov equation~\req{lyapXbb} 
		for $\Xbb$ given the input covariance $\Ubb$, and then calculate $\Rbb$ from~\req{Rbb}. 	
		
		\subsubsection*{A Power Iteration Algorithm} 
		
		The above procedure for calculating the action of $\cL$ can now be used as follows in a
		power iteration 
		method for calculating $\rho\lb\cL\rb$ as recommended in~\cite{parrilo2000cone}. Starting from 
		an arbitrary initial matrix $\Pb_0\geq 0$ 
		\[
			\Pb_{k+1} ~=~ \cL\lb \Pb_{k} \rb ~/~ \|\Pb_k\| . 
		\]
		At each step the calculation of $\cL(P_k)$ involves equations~\req{lyapXbb}-\req{Rbb} as follows 
		\begin{eqnarray}
            		\Xb_{k+1} - A~ \Xb_{k+1}~ A^*  & = &  B \Pb_k B^*,			\label{PI_lyap.eq}		\\ 
            		 \Pb_{k+1} & = & \frac{1}{\|P_k\|} \lb \Gainb \circ \lb C \Xb_{k+1}  C^*  \rb  \rb	.	\nonumber
		\end{eqnarray}
		The major computational burden in each step is solving the Lyapunov equation~\req{PI_lyap}. 
		However, this power iteration algorithm is well-suited for use with sparse methods for solving 
		Lyapunov equations, which are themselves iterative procedures. 
		
		\subsubsection*{As an LMI}

		To calculate the spectral radius $\rho\lb\cL\rb$, one can set $\Rbb = \lambda \Ubb$ in the above 
		and find the largest real number $\lambda$ such that 		
		\begin{eqnarray*}
			\lambda \lb \rule{0em}{1em} \Xbb - A \Xbb A^*\rb    - 
					B \lb\rule{0em}{1em} \Gainb \circ \lb C \Xbb  C^* \rb \rb B^*  & = &  0 , 		\\
					\Xbb & \geq & 0 . 
		\end{eqnarray*}

	\subsection{Mean Square Performance} 
	
		For the mean square performance problem, begin with the following realization for the stable, 
		strictly causal system $\cM$ 
		\begin{eqnarray*}
		x_{t+1} & = &  A x_t + B_1 w_t +B_2 r_t 	,		\\ 
		z_t  & = & C_1 x_t ,							\\
		v_t & = & C_2 x_t  .
		\end{eqnarray*} 
		Since $w$ and $r$ are mutually uncorrelated and white, the
		 corresponding realization for the covariance feedback system is 
		\begin{eqnarray*}
            		\Xb_{t+1} & = & A \Xb_t A^* + B_1\Wb_t B_1^*	+ B_2\Rb_t B_2^*		\\ 
            		\Zb_t  & = & C_1 \Xb_t  C_1^* 						\\
            		\Vb_t  & = & C_2 \Xb_t  C_2^* 						\\
            		\Rb_t & = & \Gainb \circ \Vb_t 	. 
		\end{eqnarray*}
		The corresponding steady state equations are 
		\begin{align}
            		\Xbb - A \Xbb A^*  -   B_2  \lb \Gainb \circ \lb C_2 \Xbb  C_2^*  \rb \rb  B_2^*  & =   B_1\Wbb B_1^* ,  
																\label{msplyap.eq}\\ 
            		\Zbb  & =  C_1 \Xbb  C_1^* 		.
		\end{align}
		Therefore, the main step in evaluation the covariance of the output is solving the matrix
		equation~\req{msplyap} for the state covariance. This is not a standard Lyapunov equation, but 
		iterative algorithms, akin to those designed for large-scale Lyapunov equations, can be used 
		to tackle it.

		\subsection{Worst Case Covariances} 	
		
		The ``Perron eigen-matrix'' $\Ubwh$ of the loop gain operator $\cL$~(\ref{LGO.eq}) is by definition the matrix 
		that achieves the spectral radius of $\cL$, i.e. 
		\be
			\cL\big( \Ubwh \big) ~=~ \rho(\cL) ~ \Ubwh.
		  \label{Ubwh.eq}
		\ee
		In the necessity proof of Theorem~\ref{main_MIMO_mss.thm} (and the comment thereafter), it was shown that 
		this matrix has an interpretation as a sort of worst case covariance matrix. To recap, assume MSS is lost, so 
		$ \rho(\cL) >1$, and let the exogenous disturbances be such that $d=0$, and $w$ has covariance 
		$\expec{w_t w^*_t} = \Ubwh$. Then a consequence of inequality~\req{Usub_growth} is that
		 the covariance of the signal $u$ will grow at a geometric rate of 
		 \[
		 	\expec{u_tu_t^*} ~\geq~ c ~\alpha^t ~\Ubwh ,
		 \]
		where for any $\epsilon>0$, we can choose $\alpha = \rho(\cL)-\epsilon$, and $c>0$
		is some constant. 
		
		An alternative interpretation which does not require exogenous inputs can also be given. In this scenario, 
		the exogenous inputs $w$ and $d$ are set to zero, but the system $\cM$ has some nonzero random initial 
		state $x_0$ with covariance $\expec{x_0x_0^*}=:\Xb_0$.  In this case the evolution of the state covariance
		has the following dynamics 
		\begin{equation} \label{ssIC.eqn}
			\begin{aligned}
			\Xb_{t+1} &=~ A\Xb_t A^* + B\Ub_t B^* ,  ~~~~ \Xb_0 = \expec{x_0x_0^*} ,	\\ 
			\Ub_t &=~ \Gainb \circ \left( C \Xb_t C^* \right) .
			\end{aligned}
		\end{equation}
		Now let $\Ubwh$ be an eigenmatrix of $\cL$ as (\ref{Ubwh.eq}). Then
		\begin{align*}
		\Ubwh &= \frac{1}{\rho(\cL)}\cL\big(\Ubwh \big)  = \frac{1}{\rho(\cL)}  \Gainb \circ \left( \sum_{\tau=0}^{\infty} M_t \Ubwh M_t^*\right) \\
		&=\frac{1}{\rho(\cL)} \Gainb \circ  \left(C \sum_{t=0}^\infty A^t B \Ubwh B^* {A^*}^t C^* \right) \\
		&=: \Gainb \circ \left( C \hat{\mathbf X} C^* \right), 
		\end{align*}
		where $\hat{\mathbf X} := \frac{1}{\rho(\cL)} \sum\limits_{t=0}^\infty A^t B \Ubwh B^* {A^*}^t$ is the worst-case covariance of the state. It can be calculated from $\Ubwh$ using the following algebraic Lyapunov equation
		$$ \hat{\mathbf X} - A \hat{\mathbf X} A^* = \frac{1}{\rho(\cL)} B \Ubwh B^*.$$
		Note that setting $\bf X_0 = \hat{\mathbf X}$ yields $\bf U_0 = \Ubwh$. By substituting $\Wb_t = 0$ in (\ref{recursion0.eqn}) and carrying out the same argument in the necessity proof of Theorem~\ref{main_MIMO_mss.thm}, we obtain
		\begin{equation*}
			\Ub_{Tk} \geq \cL_T^k(\Ubwh) \geq (\rho(\cL) - \epsilon c)^k \Ubwh =: \alpha^k \Ubwh.
		\end{equation*}
		This calculation shows that $\{\Ub_{Tk}\}$ is a geometrically growing sequence since $\epsilon$ can be chosen small enough so that $\alpha >1$. Consequently, by (\ref{ssIC.eqn}), we have 
		$$\Xb_{Tk+1} = A \Xb_{Tk} A^* + B \Ub_{Tk} B^* \geq \alpha^k B\Ubwh B^*,$$
		and therefore $\{\Xb_{Tk+1}\}$ is also a geometrically growing sequence.

\section{Conclusions and Discussion} 								\label{conc.sec}

	In this paper, we have study the Mean-Square Stability (MSS) and performance of linear time-invariant systems in feedback with stochastic disturbances. We derive the necessary and sufficient conditions of MSS by adopting a purely input/output approach, and thus state space realizations are treated as a special case. Our treatment leads to uncover a linear operator whose (1) spectral radius fully characterizes the conditions of MSS, and whose (2) ``Perron-Frobenius Eigenmatrix" characterizes the fastest growing modes of the covariances when MSS is lost.
	
	This paper treats the discrete-time setting where the stochastic disturbances are all white in time but are allowed to have ``spatial correlations". Future work in this line of research includes addressing the continuous-time setting and generalizing the analysis for stochastic disturbances that are correlated in time as well.

\bibliographystyle{ieeetr}
\bibliography{Master}

\appendix

\subsection{Independence} 								\label{indep.appen}

	For any two independent  random variables $a$ and $b$, $\expec{ab} = \expec{a} \expec{b}$. 
	Let $X$, $Y$ and $Z$ be (possibly matrix-valued) random variables. Assume $Y$ is independent of $X$ 
	and $Z$. Then 
	\begin{eqnarray}
		\expec{ XYZ} 
		& = & \expec{ \sum_{j,k} x_{ij} y_{jk} z_{kl}  } ~=~   \sum_{j,k} \expec{ x_{ij} y_{jk} z_{kl}  }	\nonumber	\\
		& = & 
		 \sum_{j,k} \expec{ x_{ij}  z_{kl}  }	~\expec{y_{jk}} 	\nonumber		\\
		 & = & 
		  \sum_{j,k} \expec{ x_{ij} ~ \expec{y_{jk}} ~ z_{kl}   \rule{0em}{1em} }	 	\nonumber		\\
		 & = & 
		\expec{ X~ \expec{Y} ~Z  \rule{0em}{1em}}							\label{mat_indep.eq}
	\end{eqnarray}

\subsection{Monotone Systems and Signals}		\label{monotone.appen}	

	For linear monotone systems, it is first shown that positive feedback interconnections are also monotone. It is 
	then shown that a time-invariant monotone system preserves signals' temporal order. 
	
	Begin with general comments about causal discrete-time systems. Signals are identified 
	with $\ell$, the set of vector-valued
	sequences over $\Z^+$. A causal linear system $\cG:\ell \rightarrow \ell$ is a mapping on $\ell$, and it can be 
	identified with a lower-triangular semi-infinite matrix. A (possibly unstable) positive feedback system is {\em well posed} 
	if $(I-\cG)^{-1}:\ell \rightarrow \ell$ is well defined. 
	
	Let $P_T:\ell \rightarrow \ell_T$ be the ``projection'' 
	\[
		(P_T f)(t) ~:=~   f(t) ,~~~  t\leq T, 
	\]
	where $\ell_T:= \left\{ f:\{0,\ldots,T\} \rightarrow \R^n \right\}$ is the space of finite sequences of length $T+1$. 
	With a slight abuse of notation, define the ``injection'' $P^\dagger_T:\ell_T \rightarrow \ell$ by 
	\[
		\left( P^\dagger_T f \right)(t) ~:=~ 
			\left\{  \begin{array}{lcl} 	f(t) & & t\leq T \\ 0 & & t>T	\end{array} \right. 		. 
	\]
	Clearly $P_T P^\dagger_T = I$, and for any system $\cG$, $P_t \cG P^\dagger_T$ is the finite matrix
	 ``upper left block'' of 
	its semi-infinite matrix representation. It can be thought of as a finite time-horizon restriction of $\cG$. 
	Causality of $\cG$ implies that 
	\[
		P_T \cG^n P^\dagger_T ~=~ \lb P_T \cG P^\dagger_T \rb^n 
	\]
	for any power $n$, and if $\cG^{-1}$ exists, then 
	\[
		P_T \cG^{-1} P^\dagger_T ~=~ \lb P_T \cG P^\dagger_T \rb^{-1}.
	\]
	We finally note that while $P_T^\dagger P_T \neq I$, for any  causal system
	 $\cG$, and any time $T$ we have 
	\[
		P_T~ \cG~ P_T^\dagger  P_T
		~=~ 
		P_T~ \cG
	\]
	
	\subsubsection{Feedback Interconnections}
	
	\begin{theorem} 				\label{monfdbk.thm}
		The sum, cascade and positive feedback interconnections of causal
		monotone linear systems are monotone. 
	\end{theorem}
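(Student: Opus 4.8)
The plan is to handle the three interconnections in increasing order of difficulty, with the sum and cascade cases being essentially definitional and the positive feedback case carrying all the real content. For the sum, if $\cG_1$ and $\cG_2$ are monotone then $\Ub\le\Wb$ gives $\cG_1(\Ub)\le\cG_1(\Wb)$ and $\cG_2(\Ub)\le\cG_2(\Wb)$, and adding these two inequalities shows $\cG_1+\cG_2$ is monotone. For the cascade I would simply chain the implications: monotonicity of $\cG_1$ produces an ordered pair of intermediate signals, to which monotonicity of $\cG_2$ then applies, so $\cG_2\cG_1$ is monotone. The substance of the theorem is the positive feedback case, namely showing that $(I-\cG)^{-1}$ is monotone whenever $\cG$ is causal, monotone, and the interconnection is well posed.

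The difficulty here is that the closed loop need not be stable, so $(I-\cG)^{-1}$ cannot in general be expanded as a globally convergent Neumann series in the monotone operator $\cG$ — the device used for the static operator in Theorem~\ref{Ialpha.thm}(3). My plan is therefore to localize in time. Using the projection/injection pair $P_T,P_T^\dagger$ introduced above together with causality of $\cG$ and the identities established there, the finite-horizon restriction of the closed-loop map equals the inverse of the finite-horizon restriction of the open-loop map, $P_T(I-\cG)^{-1}P_T^\dagger = (I-P_T\cG P_T^\dagger)^{-1}$, where I have used $P_TP_T^\dagger=I$. It thus suffices to prove that this finite block-matrix operator is monotone for every $T$, and then to pass to the limit.

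For the finite-horizon step, I would observe that $P_T\cG P_T^\dagger$, with blocks indexed by the times $0,\ldots,T$ (each block an operator on matrices), is block lower triangular by causality. Split it as $\mathcal D+\mathcal N$, where $\mathcal D$ collects the monotone diagonal feedthrough blocks and $\mathcal N$ is strictly block lower triangular, hence nilpotent with $\mathcal N^{T+1}=0$. Writing $(I-\mathcal D-\mathcal N)^{-1}=\big(I-(I-\mathcal D)^{-1}\mathcal N\big)^{-1}(I-\mathcal D)^{-1}$, the factor $(I-\mathcal D)^{-1}$ is block diagonal with each block a static inverse that is monotone by Theorem~\ref{Ialpha.thm}(3), while $(I-\mathcal D)^{-1}\mathcal N$ is a composition of monotone operators and is nilpotent, so $\big(I-(I-\mathcal D)^{-1}\mathcal N\big)^{-1}=\sum_{k=0}^{T}\big((I-\mathcal D)^{-1}\mathcal N\big)^{k}$ is a \emph{finite} sum of compositions of monotone operators. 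By the sum and cascade cases already proved, the finite-horizon inverse is monotone. In the application of Figure~\ref{M_Delta_cov.fig} the forward system $\cM$ is strictly causal, so $\mathcal D=0$ and this reduces to a terminating Neumann series in $\mathcal N$ alone.

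Finally I would lift monotonicity from finite to infinite horizon. Because the order on matrix-valued signals in~\req{pointwise_ordering} is pointwise in time, $\Yb\le\Zb$ holds iff $\Yb(t)\le\Zb(t)$ for every $t$. Fixing $t$ and choosing any $T\ge t$, causality of $(I-\cG)^{-1}$ gives $\big((I-\cG)^{-1}\Ub\big)(t)=\big(P_T(I-\cG)^{-1}P_T^\dagger\,P_T\Ub\big)(t)$, and since the finite-horizon operator is monotone, $\Ub\le\Wb$ forces the $t$-th components of the outputs to be ordered; as $t$ is arbitrary, $(I-\cG)^{-1}$ is monotone. The one genuinely delicate point I expect to be the main obstacle is the feedthrough inverse $(I-\mathcal D)^{-1}$: its monotonicity is exactly the static Perron--Frobenius result of Theorem~\ref{Ialpha.thm}(3) and requires the spectral-radius condition on the feedthrough blocks, but it is trivially satisfied (indeed vacuous) in the strictly causal setting that the paper actually uses.
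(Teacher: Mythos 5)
Your proposal is correct, and its core mechanism is the same as the paper's: restrict to a finite horizon with $P_T,P_T^\dagger$, use causality to identify the closed-loop restriction with the inverse of the open-loop restriction, invoke nilpotence of the strictly lower-triangular part to make the Neumann series terminate, and conclude by closure of monotonicity under finite sums and compositions. Where you genuinely diverge is the $\mathcal D+\mathcal N$ splitting of $P_T\cG P_T^\dagger$ into feedthrough and strictly causal parts. The paper skips this: its proof simply asserts that strict causality of $\cM$ makes $P_T\cM P_T^\dagger$ strictly lower triangular and writes $(I-\cM)^{-1}$ directly as the truncated series $\sum_{n=0}^T \cM^n$, even though the theorem is stated for merely \emph{causal} systems. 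Your factorization $\bigl(I-(I-\mathcal D)^{-1}\mathcal N\bigr)^{-1}(I-\mathcal D)^{-1}$ buys a proof that covers nonzero feedthrough, at the cost of an extra hypothesis you correctly identify: monotonicity of $(I-\mathcal D)^{-1}$ requires the spectral radius of each feedthrough block to be below one (Theorem~\ref{Ialpha.thm}(3)), not merely well-posedness, and a counterexample such as a scalar feedthrough gain of $2$ shows the theorem as literally stated would otherwise fail. In the strictly causal setting actually used in Figure~\ref{M_Delta_cov.fig} your argument collapses to the paper's. One small point to make explicit if you write this up: the individual blocks of $\mathcal D$ and $\mathcal N$ are themselves monotone static operators because applying the monotone system $\cG$ to an input supported at a single time isolates each impulse-response block.
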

	\begin{proof}	
	Closure under sums and cascades is obvious from the definition~\req{mondef}. This in particular implies that 
	powers $\cM^n$ of any monotone system $\cM$ are also monotone. Therefore, if the Neuman series 
	\[
		\lb I - \cM \rb^{-1} ~=~ \sum_{n=0}^\infty \cM^n 
	\]
	can be shown to converge in an appropriate sense, then positive feedback interconnections are also monotone. 
	A convergence argument is now given. It is similar to successive iteration 
	schemes for Volterra operators~\cite{rienag90} (see also~\cite[Appendix]{bamieh2005convex}). 
	
	Consider the partial series product
	\begin{multline}
		P_T(I-\cM) P^\dagger_T P_T   \lb \sum_{n=0}^N \cM^n \rb P^\dagger_T 	\\
		= P_T(I-\cM)  \lb \sum_{n=0}^N \cM^n \rb P^\dagger_T 		
		 =   I- P_T \cM^{N+1} P^\dagger_T				\\
		= I -  \lb P_T \cM P^\dagger_T\rb^{N+1},
	\end{multline}
	where the last equality follows from the causality of $\cM$. Strict causality of $\cM$ means
	$P_T \cM P^\dagger_T$ is 
	just a $(T+1)\times (T+1)$ strictly lower-triangular matrix. It is therefore nilpotent and 
	\[
		\lb P_T \cM P^\dagger_T\rb^{N+1} = 0 , ~~N\geq T. 
	\]
	The conclusion is then that 
	\[
		 P_T(I-\cM)^{-1} P^\dagger_T  =
		\lb P_T(I-\cM) P^\dagger_T \rb^{-1} \!\!\! \!=  P_T \lb \sum_{n=0}^T \cM^n \rb P^\dagger_T .
	\]
	This means that for each $T$, the finite horizon restriction $ P_T(I-\cM)^{-1} P^\dagger_T $ is monotone, and 
	therefore the system $(I-\cM)^{-1}$ itself must be  monotone. 
	\end{proof}
	
	\subsubsection{Preserving Monotonicity of Signals}				\label{tempord.appen}

            	For time invariant systems, the above definition of monotonicity has an additional implication in that
            	non-decreasing input sequences produce non-decreasing output sequences. 
            	Consider the input-output pair $\Yb = \cM ( \Ub)$. Let
            	$\cS$ be the right shift operator on sequences
            	\[
            		\left( \cS  \Ub  \right) (t)  
            		~:=~ \left\{  \begin{array}{lcl}  \Ub(t-1), & & t\geq 1,\\  0, & & t=0. 	\end{array}  \right. 
            	\]
            	The time invariance of $\cM$ means that $\cM \cS^n = \cS^n \cM$ for all powers $n\geq 1$. 
            	Recall that a signal is said to be {\em monotone} (or {\em non-decreasing}) if
            	\[
            		t_1\leq t_2  ~~~\Longrightarrow~~~ \Ub(t_1) \leq \Ub(t_2).  
            	\]
            	An equivalent condition for a signal to be monotone is 
            	\[
            	 	\forall~ n\geq 1, ~~~~	\cS^n \Ub ~\leq~ \Ub, 
            	\]
            	where the relation $\leq$ is the pointwise ordering on signals~\req{pointwise_ordering}. Now calculate that 
            	for any $n\geq 1$
            	\[
            		\cS^n \Yb ~=~ \cS^n \cM(\Ub) ~=~ \cM ( \cS^n \Ub) ~\leq~ \cM(\Ub) ~=~ \Yb, 
            	\]
            	where the inequality follows from $\cM$ being monotone together with $\Ub$ being  nondecreasing. 
            	One can therefore conclude that a {\em time-invariant monotone system preserves monotonicity 
            	of signals}.

\subsection{Proof of Lemma~\ref{pastpresent.lemma} }

	First observe that the mappings from the exogenous inputs to all signals in the loop are 
	\be
	\bbm 
	u \\ y \\ v \\ r 
	\ebm 
	=
	\bbm 
	(I-\Gain\cM)^{-1} 			&	\Gain (I-\cM\Gain)^{-1}  \\ 
	(I-\cM\Gain)^{-1}\cM		&	\cM \Gain (I-\cM\Gain)^{-1}\\
	(I-\cM\Gain)^{-1}\cM		&	 (I-\cM\Gain)^{-1} \\
	\Gain(I-\cM\Gain)^{-1}\cM 	&	\Gain (I-\cM\Gain)^{-1}  
	\ebm 
	\bbm 
	w \\ d 
	\ebm	 . 
	\label{GangOfFour.eq}	
	\ee
	Thus one needs to investigate the causal dependencies of all thethe four mappings in the matrix of (\ref{GangOfFour.eq}). 
	Consider first the mapping $\big( I - \cM  \Gain \big)^{-1}$.
	Over the time horizon $[0,t]$ this operator  can be written in partitioned matrix form as 
	\[
		 \bbm  I 				 				& 			& & \\ 
		 	    -M_1 \Gain_0 					& 	I		& & \\ 
			   								& 		\ddots	& \ddots & \\ 
			-M_t \Gain_0 			& \cdots 		& -M_1\Gain_{t-1} & I
		\ebm	^{-1} 
	=
		\bbm	I & & \\ 
				 & \ddots & \\ 
				* &  & I
		\ebm		
	\]
	Note the strictly block lower-triangular structure of $\cM\Gain$ which is a consequence of the strict causality 
	of $\cM$. The $*$ blocks are functions of $\Gain_0, \ldots, \Gain_{t-1}$, and are independent of $\Gain_\tau$, $\tau\geq t$. 
	
	Using this we write the equation for $u$ over the time horizon 
	\begin{align*}
		\bbm 
			u_0  \\ \vdots  \\ u_t    
		\ebm
	~=& 
		\bbm	I & & \\ 
				 & \ddots & \\ 
				* &  & I
		\ebm		
		\bbm 
			w_0  \\ \vdots  \\ w_t    
		\ebm								\\
	& +
		\bbm	\Gain_0 & & \\ 
				 & \ddots & \\ 
				 &  & \Gain_t
		\ebm		
		\bbm	I & & \\ 
				 & \ddots & \\ 
				* &  & I
		\ebm		
		\bbm 
			d_0  \\ \vdots  \\ d_t    
		\ebm
	\end{align*}
	Now recall that all noise terms $\{\Gain_t\}$, $\{d_t\}$ and $\{w_t\}$ are assumed mutually independent. The
	equation above shows that $\{u_0,\ldots,u_t\}$ are a function of only past and present values of 
	$\{\Gain_t\}$, $\{d_t\}$ and $\{w_t\}$. 
	
	It is now clear that by repeating the above argument for each of the signals that whenever $\Gain$ is preceded by the operator $\cM$, the dependence on the present value of $\Gain$ is killed by the strict causality of $\mathcal M$. Therefore, the following conclusion can be 
	stated: for any signal  in~\req{GangOfFour} that involves $\Gain$ only when it is preceded by the mapping $\cM$ (namely the signals $y$ and $v$),  present values of that signal are independent 
	of current and future values of $\Gain$.
    Otherwise (namely the signals $u$ and $r$),  their present values are independent of future values of $\Gain$ only.

\subsection{Some Properties of the Hadamard Product}


	Let $\pi$ be a permutation matrix, this means each row and each column contains exactly one non-zero element equal to 1. A non-zero element in location $ij$ implies that the $j$'th component of a vector is mapped to the $i$'th component of the vector. Thinking of the inverse operation, clearly $\pi^{-1} = \pi^*$. There are in general no simple relations between 
	the regular matrix product and the Hadamard product. However, for permutation matrices, we have the simple 
	relation 
	\be 
		\pi_1 \left( A \circ B \right) \pi_2 ~=~ \left(  \pi_1A \pi_2\right) \circ \left(   \pi_1 B \pi_2 \right), 
	   \label{had_prop.eq}	
	\ee
	which is obviously true since for any matrix $M$, the matrix $\pi_1 M \pi_2$ is simply a rearrangement 
	of  its entries. 
%
%

\subsection{Proof of Necessity in Lemma~\ref{siso.lemma}}
	\label{nec1.appen}

	\textit{``only if''}) To simplify notation, assume $\Gainb = 1$. The general case follows by scaling.   
	It will be shown next that if $ \|\cM\|_2^2 \geq 1$, 
	 $w$ is a white, constant variance 
	process and $d=0$,  then $\ub$ is an  unbounded sequence. 
	
	From~\req{rho_uk}, \req{siso_delta_variances}, \req{sume} and~\req{siso_variances} the sequence $\ub$ satisfies the 
	following recursion  
	\be
		\ub_t ~ = ~  \yb_t ~+~ \wb_t
		~ = ~
				 \sum_{\tau=0}^t M_{t-\tau}^2 ~\ub_\tau~+~  \wb_t  .
	   \label{rhorec.eq}
	\ee
	This recursion may not be of finite order (e.g. if $\{M_t\}$ is not FIR) and it is therefore not clear how 
	to use it to estimate the growth of $\ub$. However, it can be replaced with a simple recursive {\em inequality} 
	for a subsequence of $\ub$, for which a growth estimate is immediately obtained. This is the essence of the 
	remainder of the proof. 
		
	Note that the quantity 
	$
		 \alpha~:=~  \sum_{\tau=0}^{T}   M_\tau^2  
	$
	can be made arbitrarily close to $\|\cM\|^2_2 \geq 1$ by choosing the time horizon $T$ sufficiently large. 
	It will now be shown that the subsequence $\{ \ub_{T k}; ~k\in\Z^+ \}$ is unbounded. 
	First, the non-negativity of all sequences in~\req{rhorec} gives a recursive {\em inequality} 
	for the subsequence  $\{ \ub_{T k} \}$ 
	\begin{eqnarray*} 
		\ub_{T k} 
			& = &  \sum_{\tau=0}^{T k}  M_{T k-\tau}^2~ \ub_\tau ~+~ \wb_{Tk} \\
			& \geq & \sum_{\tau=T(k-1)}^{T k}  M_{T k-\tau}^2~ \ub_\tau ~+~ \wb_{Tk}  \\
			& \geq & \left( \sum_{\tau=0}^T M_\tau^2 \right)  \min_{T (k-1) \leq \tau \leq T k} \ub_\tau 
							~+~ \wb_{Tk} 			\\
			& = &     \alpha ~\ub_{T(k-1)} ~+~ \wb_{Tk} , 
	\end{eqnarray*} 
	where the last equality follows from the monotonicity of the sequence $\ub$. 
	The above is a difference inequality which has the initial condition 
	$\ub_0 =  \rbo_0+ \wb_0 = \wb_0$ 
	($\rbo_0=0$ since $d=0$ and $\cM$ is strictly causal).
	A simple induction argument gives  
	\be
		\ub_{T k} ~\geq~ 	 \sum_{r=0}^k \alpha^r ~\wb_{T(k-r)} 
	   \label{alphabound.eq}
	\ee
	which is a convolution of $\{\alpha^k\}$ with the subsequence $\{\wb_{Tk}\}$. 
	Now if $\|\cM\|_2>1$, then a time horizon $T$ can be chosen such that $\sum_{\tau=0}^{T}   M_\tau^2  =: \alpha >1$.  
	The monotonicity of the sequence $\db$ 
	and~\req{alphabound}  implies that $\{\ub_{T k}\}$ (and thus $\ub$) is a 
	geometrically increasing sequence. 
	
	The case $\|\cM\|_2 = 1$ is slightly more delicate. 
	 $T$ can be chosen such that $\alpha$ is as close to $1$ as desired. For $\alpha<1$ one also 
	has
	\[
		\lim_{k\rightarrow\infty} \left( \alpha^k+ \cdots + \alpha + 1 \right) ~ =~ \frac{1}{1-\alpha}.
	\]
	For any $\epsilon>0$,  $k$ can be 
	chosen such that\footnote{$\db_0$ is chosen as a simple lower bound on the entire sequence $\db$. Other choices can 
		produce better lower bounds on $\ub$. } 
	\[
		\ub_{T k} ~\geq~ \frac{\wb_0}{1-\alpha} - \epsilon .
	\]
	Now given any lower bound $B$, choose $T$ and $k$ such that 
	$\alpha$ is sufficiently close to $1$ and $\epsilon$ is sufficiently small so that 
	\[
		\ub_{T t} ~\geq ~  \frac{\wb_0}{1-\alpha} - \epsilon ~>~ B. 
	\]
	This proves that $\ub$ is an unbounded sequence even though it may not have geometric
	growth in the case $\|\cM\|_2 = 1$. 

\subsection{Lemmas Used in the Proof of Necessity in Theorem~\ref{main_MIMO_mss.thm} }
\label{.appen}
Throughout this appendix, let $\emax{A}, \emin{A}$ and $||A||$ denote the largest eigenvalue, smallest eigenvalue, and the spectral norm of any matrix $A$, respectively. As the rest of the paper, matrix inequalities are understood as semi-definite ordering on matrices. Note that if $A \geq 0$, then $||A|| = \emax{A}$. Furthermore, let $\mathcal N(A)$ and $\mathcal R(A)$ denote the null space and range space of $A$, respectively. We now present two lemmas that are required for the proof of necessity for Theorem~\ref{main_MIMO_mss.thm}.
\begin{lemma} \label{nullspace.lemma}
	Let $A, B \in \mathbb R^{n\times n}$, such that $0 \leq B \leq A$. Then $\mathcal N(A) \subseteq \mathcal N(B)$.
\end{lemma}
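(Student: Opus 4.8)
The plan is to use the standard characterization of the null space of a positive semi-definite matrix through its associated quadratic form. First I would fix an arbitrary vector $x \in \mathcal N(A)$, so that $Ax = 0$ and hence $x^* A x = 0$. The semi-definite ordering $0 \leq B \leq A$ then yields the chain of scalar inequalities
\[
	0 ~\leq~ x^* B x ~\leq~ x^* A x ~=~ 0,
\]
which forces $x^* B x = 0$. The middle inequality here is exactly the pointwise scalar statement encoded in $B \leq A$, applied to the single vector $x$, while the outer equality on the left uses $B \geq 0$.

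The key remaining step is to upgrade the scalar identity $x^* B x = 0$ to the vector identity $Bx = 0$. This is where positivity of $B$ is essential: since $B \geq 0$, it admits a factorization $B = C^* C$ (for instance via its positive semi-definite square root $C = B^{1/2}$). Then $x^* B x = \| C x \|^2 = 0$ implies $C x = 0$, and consequently $Bx = C^* C x = 0$, so $x \in \mathcal N(B)$. As $x \in \mathcal N(A)$ was arbitrary, this gives the desired inclusion $\mathcal N(A) \subseteq \mathcal N(B)$.

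I expect no serious obstacle in this argument; the only point requiring any care is the implication $x^* B x = 0 \Rightarrow Bx = 0$, which is precisely the content of positivity and would fail for an indefinite $B$. The factorization argument dispatches it cleanly, so the entire proof reduces to the elementary quadratic-form manipulation above.
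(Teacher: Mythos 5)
Your proof is correct and follows essentially the same route as the paper: both reduce the problem to showing $x^*Bx = 0 \Rightarrow Bx = 0$ for $B \geq 0$, the paper doing so via the eigendecomposition $B = U\Sigma U^*$ and you via the equivalent square-root factorization $B = C^*C$. No issues.
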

\begin{proof}
	Since $0 \leq B \leq A$, then $\forall v \in \mathbb R^n$ we have ${0 \leq v^* B v \leq v^* A v}$. Particularly, let $v \in \mathcal N(A)$, then ${v^*Av = 0}$ which implies that $v^*Bv = 0$ as well. We are now left with proving that $Bv=0$. 
	
	Let $B = U\Sigma U^*$ be the eigendecomposition of $B$, then ${v^* U \Sigma U^* v = 0}$. Setting $w := U^*v$ yields $w^* \Sigma w = 0$ which implies that $\Sigma w = 0$ (because $\Sigma$ is diagonal with nonnegative entries). Finally, we have	${Bv = U \Sigma U^* v = U \Sigma w = 0}$, which completes the proof.
\end{proof}
\begin{lemma} \label{limit_ordering.lemma}
	Let $A, B \in \mathbb R^{n\times n}$ and $\rho, \epsilon > 0$ such that ${0 \leq B \leq \rho A}$ and $||\rho A - B|| \leq \epsilon ||A||$.
	Then $\exists c >0$ such that $ B \geq (\rho - \epsilon c) A$. 
\end{lemma}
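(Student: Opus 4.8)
The plan is to rewrite the desired conclusion $B \geq (\rho - \epsilon c) A$ in the equivalent form $\rho A - B \leq \epsilon c\,A$ and to bound the \emph{error matrix} $E := \rho A - B$. From the hypothesis $0 \leq B \leq \rho A$ with $\rho>0$ we immediately obtain $A \geq 0$ and $0 \leq E \leq \rho A$, so $E$ is positive semidefinite with operator norm controlled by $\|E\| = \|\rho A - B\| \leq \epsilon \|A\|$. The entire difficulty is that $A$ may be \emph{singular}: if $A$ were positive definite we could simply use $A \geq \emin{A}\, I$ and take $c = \|A\|/\emin{A}$, but with $A$ only semidefinite this bound degenerates on the null space of $A$.

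First I would invoke Lemma~\ref{nullspace.lemma} with the pair $(\rho A,\, E)$: since $0 \leq E \leq \rho A$, it yields $\mathcal{N}(A) = \mathcal{N}(\rho A) \subseteq \mathcal{N}(E)$. Because $E = E^*$, this containment of null spaces dualizes to a containment of ranges, $\mathcal{R}(E) = \mathcal{N}(E)^\perp \subseteq \mathcal{N}(A)^\perp = \mathcal{R}(A)$; that is, $E$ lives entirely on the range of $A$. Letting $P$ denote the orthogonal projection onto $\mathcal{R}(A)$, this gives $E = P E P$, from which the standard estimate $E \leq \|E\|\,P$ follows, since for any vector $v$ one has $v^* E v = (Pv)^* E (Pv) \leq \|E\|\,(Pv)^*(Pv) = \|E\|\, v^* P v$, using $P = P^* = P^2$ and $\|E\| = \emax{E}$.

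Next I would bound the projection $P$ by $A$ itself. Assuming $A \neq 0$ (the case $A = 0$ forces $B = 0$ and the statement is trivial), let $\lambda^+_{\min}(A) > 0$ be the smallest \emph{nonzero} eigenvalue of $A$. In the eigenbasis of $A$, every eigenvalue supporting $\mathcal{R}(A)$ is at least $\lambda^+_{\min}(A)$, hence $A \geq \lambda^+_{\min}(A)\,P$, i.e. $P \leq A/\lambda^+_{\min}(A)$. Chaining the two bounds gives
\[
	E \;\leq\; \|E\|\,P \;\leq\; \frac{\|E\|}{\lambda^+_{\min}(A)}\,A \;\leq\; \frac{\epsilon\,\|A\|}{\lambda^+_{\min}(A)}\,A .
\]
Setting $c := \|A\|/\lambda^+_{\min}(A)$, a constant depending only on $A$, this reads $\rho A - B \leq \epsilon c\,A$, which is exactly $B \geq (\rho - \epsilon c)\,A$.

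The main obstacle is precisely the possible singularity of $A$, and the tool that overcomes it is Lemma~\ref{nullspace.lemma}: it guarantees that the error matrix $E$ has no component in the directions where $A$ degenerates, which is what licenses replacing the (possibly vanishing) smallest eigenvalue of $A$ by its smallest nonzero eigenvalue. Everything else is routine semidefinite manipulation.
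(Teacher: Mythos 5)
Your proof is correct and follows essentially the same route as the paper's: both use Lemma~\ref{nullspace.lemma} to confine the error matrix $\rho A - B$ to the range of $A$ and then bound it there by the smallest nonzero eigenvalue of $A$, arriving at the same constant $c = \|A\|/\lambda^{+}_{\min}(A)$ (the paper writes it as the condition number of $U_1^* A U_1$, which is the same quantity). The only difference is organizational: you express the range restriction through the orthogonal projection $P$ and the chain $\rho A - B \leq \|\rho A - B\|\,P \leq \epsilon c\, A$, whereas the paper first proves the positive definite case and then reduces the general case to it via the explicit block eigendecomposition $A = U \Sigma U^*$.
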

\begin{proof}
	The proof is carried out for the case where $A > 0$ first. Then the result is exploited to prove the more general case where $A \geq 0$. 
	
	``$A > 0$") $\forall v \in \mathbb R^n$, we have
	\begin{align*}
	v^* (\rho A - B) v &\leq ||\rho A - B ||~||v||^2 \leq \epsilon ||A||~||v||^2, 
	\end{align*}
	where the first inequality follows by noting that $\rho A - B \geq 0$ and thus $\emax{\rho A - B}=||\rho A - B|| $. Recalling that ${v^*Av \geq \emin{A} ||v||^2}$ and $A > 0$ (i.e. $\emin{A} > 0$), we obtain the following upper bound
	$$ \epsilon ||A||~||v||^2 \leq \epsilon ||A|| \frac{1}{\emin{A}} v^*Av =: \epsilon c~v^*Av,$$
	where $c:= \emax{A} / \emin{A}$ is the condition number of $A$. Now, $\forall v \in \mathbb R^n, v^*(\rho A - B)v \leq \epsilon c~v^*Av$ which implies that $\rho A - B \leq \epsilon c A$. Finally, rearranging the last inequality completes the proof.
	
	``$A\geq 0$") Let $r < n$ denote the rank of $A$ so that its eigendecomposition can be written as
	\begin{equation*}
	A = U \Sigma U^* = 
	\matbegin
	\begin{array}{c:c}
	U_1 & U_2 \\ \\ 
	\end{array}
	\matend
	\matbegin
	\begin{array}{c:c}
	\Sigma_r & 0\\ \hdashline 0 & 0
	\end{array}
	\matend 
	\matbegin
	\begin{array}{cc}
	U_1^* & \\ \hdashline U_2^* & 
	\end{array}
	\matend, 
	\end{equation*}
	where $U$ is a unitary matrix and $\Sigma_r$ is a diagonal matrix with strictly positive entries. Before we continue the proof, observe that this matrix partitioning indicates that 
	\begin{itemize}
		\item $\mathcal N(A) = \mathcal R (U_2)$, and thus $AU_2 = 0$.
		\item $A = U_1 \Sigma_r U_1^*$ and thus $U^*_1 A U_1 > 0$ (since $U_1^*U_1 = I$).
		\item Lemma~\ref{nullspace.lemma} guarantees that $\mathcal N(A) \subseteq \mathcal N(B)$, and thus $AU_2 = BU_2 = 0$.
	\end{itemize} 
	Multiplying all sides of the inequality $0 \leq B \leq \rho A$ by $U^*$ from the left and $U$ from the right preserves its ordering, then $0 \leq U^* B U \leq ~\rho U^* A U$ which implies
	\begin{align*}
	0 \leq 
	\matbegin
	\begin{array}{c:c}
	U_1^* B U_1 & 0\\ \hdashline 0 & 0
	\end{array}
	\matend \leq 
	\matbegin
	\begin{array}{c:c}
	\rho ~U_1^* A U_1 & 0\\ \hdashline 0 & 0
	\end{array}
	\matend.
	\end{align*}
	This is a consequence of $AU_2 = BU_2 = 0$ and $U_2^*A = U_2^*B = 0$. Define $A_{11} := U_1^*A U_1$ and $B_{11} := U_1^*BU_1$, then we have 
	\begin{equation} \label{cond1.eqn}
	0 \leq B_{11} \leq \rho A_{11}.
	\end{equation}
	Furthermore, recalling that $||\rho A - B || \leq \epsilon ||A||$, and knowing that the spectral norm of a matrix is preserved under multiplications by unitary matrices, we obtain ${||U^*(\rho A - B) U || \leq \epsilon || U^* A U||}$ which implies
	\begin{equation} \label{cond2.eqn}
	||\rho A_{11} - B_{11}|| \leq \epsilon || A_{11}||.
	\end{equation}
	Since $A_{11} > 0$, the first part of the proof (``$A > 0$") can be invoked here by exploiting (\ref{cond1.eqn}) and (\ref{cond2.eqn}) to obtain ${B_{11} \geq (\rho - \epsilon c) A_{11}}$ where $c:= \emax{A_{11}}/\emin{A_{11}}$. This implies that
	\begin{equation*}
	\matbegin
	\begin{array}{c:c}
	U_1^* B U_1 & 0\\ \hdashline 0 & 0
	\end{array}
	\matend \geq (\rho - \epsilon c)
	\matbegin 
	\begin{array}{c:c}
	U_1^* A U_1 & 0\\ \hdashline 0 & 0
	\end{array}
	\matend.
	\end{equation*}
	Finally, multiplying both sides of the inequality by $U$ from the left and $U^*$ from the right completes the proof because
	\begin{equation*}
	U^*AU = 
	\matbegin 
	\begin{array}{c:c}
	U_1^* A U_1 & 0\\ \hdashline 0 & 0
	\end{array}
	\matend \implies
	U
	\matbegin 
	\begin{array}{c:c}
	U_1^* A U_1 & 0\\ \hdashline 0 & 0
	\end{array}
	\matend
	U^* = A, 
	\end{equation*}
	and the same reasoning holds for $B$.
	
\end{proof}

\end{document}